\declaretheorem[name=Theorem,numberwithin=section]{thm}
\declaretheorem[name=Lemma,numberwithin=section]{lemma}
\declaretheorem[name=Corollary,numberwithin=section]{corollary}
\declaretheorem[name=Definition,numberwithin=section]{definition}
\declaretheorem[name=Remark,numberwithin=section]{remark}
\DeclareMathOperator{\NE}{NE}
\DeclareMathOperator{\RNE}{RNE}
\DeclareMathOperator{\AND}{AND}
\DeclareMathOperator{\OR}{OR}
\DeclareMathOperator{\INTS}{INTS}             
\DeclareMathOperator{\IP}{IP}                       
\DeclareMathOperator{\EQ}{EQ}                       
\newcommand{\N}{\mathbb{N}}
\newcommand{\cA}{\mathcal A}
\newcommand{\cM}{\mathcal M}
\newcommand{\cO}{\mathcal O}
\newcommand{\cP}{\mathcal P}
\journal{Journal of Computer and System Sciences}
\newcommand{\vertc}{|\hspace{-1.5mm}c}        
\DeclareMathOperator{\lang}{L}     
\DeclareMathOperator{\ttm}{T}     
\DeclareMathOperator{\ssp}{S}    
\DeclareMathOperator{\ts}{TS}     
\begin{document}

\begin{frontmatter}

\title{Lifting query complexity to time-space complexity for two-way finite automata\tnoteref{mytitlenote}}
\tnotetext[mytitlenote]{ Some results in this paper were pointed out in our conference paper (in the Proceedings of 6th International Conference on Theory and Practice of Natural Computing,  LNCS 10687, pp. 305-317, 2017).}



\author[mymainaddress1]{Shenggen Zheng\corref{eqc}}
\address[mymainaddress1]{Peng Cheng Laboratory, 
Shenzhen 518055, China}
\cortext[eqc]{These authors contributed equally: Shenggen Zheng, Yaqiao Li\\Corresponding author: Shenggen Zheng}
\ead{zhengshg@pcl.ac.cn}


\author[mymainaddress]{Yaqiao Li\corref{eqc}}
\address[mymainaddress]{Shenzhen MSU-BIT University, Shenzhen, 518172, China}
\ead{yqlipublic@gmail.com}
\author[mymainadd1]{Minghua Pan}
\address[mymainadd1]{Guangxi Key Laboratory of Cryptography and Information Security, Guilin University of Electronic Technology, Guilin 541004, China}

\author[mymainaddress3]{Jozef Gruska}
\address[mymainaddress3]{ Faculty of Informatics, Masaryk University, Brno 60200, Czech Republic}
\ead{gruska@fi.muni.cz}

\author[mymainaddress4]{Lvzhou Li}
\ead{lilvzh@mail.sysu.edu.cn}
\address[mymainaddress4]{ School of Computer Science and Engineering, Sun Yat-Sen University, Guangzhou 510006, China}


\begin{abstract}
Time-space tradeoff has been studied in a variety of models, such as Turing machines, branching programs,  and finite automata, etc. While communication complexity as a technique has been applied to study finite automata, it seems it has not been used to study time-space tradeoffs of finite automata. We design a new technique showing that separations of query complexity can be lifted, via communication complexity, to separations of time-space complexity of two-way finite automata. As an application, one of our main results exhibits the first example of a language $L$ such that  the time-space complexity of two-way probabilistic finite automata with a bounded error (2PFA) is $\widetilde{\Omega}(n^2)$, while of exact two-way quantum finite automata with classical states (2QCFA) is $\widetilde{O}(n^{5/3})$, that is, we demonstrate for the first time that exact quantum computing has an advantage in time-space complexity comparing to classical computing.
\end{abstract}

\begin{keyword}
Quantum computing, Time-space complexity, Two-way finite automata, Communication complexity,  Lifting theorems, Query algorithms
\MSC[2010] 68Q12\sep 68Q45\sep  94A99
\end{keyword}


\end{frontmatter}



\renewcommand*{\thefootnote}{\arabic{footnote}}



\section{Introduction}

An important way to get deeper insights into the power of various quantum resources and operations is to explore the power of various quantum variations of the basic models of classical computation. Of a special interest is to do that for various quantum variations of the classical finite automata, especially for those that use limited amounts of quantum  resources. It has been proved \cite{Kla00}  that exact 1-way quantum finite automata with classical states (1QCFA) \cite{2QCFA,ZhgQiu112} have no state (space) complexity  advantage over deterministic finite autumata in recognizing a language \cite{Klk03}. One of the main results in this paper shows that exact two-way quantum finite automata with classical states (2QCFA) \cite{2QCFA}, however, do have complexity advantage  over two-way probabilistic finite  automata (2PFA). We will show this advantage via the time-space complexity (i.e, the  product  of  the  time  and  the  logarithm  of  the number of states).

Time-space tradeoff is an important topic that dates back at least as early as  \cite{TStradeoff66} where time-space tradeoff for Turing machines were studied. It is often too difficult to obtain separation results for time complexity or space complexity alone, hence in practice one alternative is to study the product of time and space, so-called the time-space tradeoff. This has been studied in a variety of models, such as Turing machines and branching programs \cite{BNS92, BST98, cctoBP3},  finite automata \cite{TStradeoff2DFA81},  and  other specific computational problems \cite{Klk03, KSW07}, etc.

Early time-space tradeoff results are mostly obtained via various combinatorial methods, such as \cite{ tradeoffdiag,TStradeoff66, TStradeoff2DFA81} etc. One strong tool for studying the time-space tradeoff for branching programs is via communication complexity, see e.g., \cite{BNS92,BST98,cctoBP3}. The time-space tradeoff for language recognition and automata have been studied before, such as \cite{ automataTradeoff3,TStradeoff2DFA81,automataTradeoff2}, however, communication complexity was not involved. For time complexity alone, consider the non-regular language $L=\{x^ny^n \ | \ n\ge 1\}$. It is well-known that 2DFA cannot recognize $L$, 2PFA recognizes $L$ with exponential time \cite{2PFAlb,2PFAub}, while 2QCFA needs only polynomial time \cite{2QCFA}. This separation is also shown  via combinatorial analysis.  On the other hand, communication complexity has been used to study languages \cite{autoCC86}. Communication complexity has also been applied to study finite automata, though  previous studies are concerned on space complexity of the (classical and quantum)  automata, such as \cite{GRUSKA2017Generalizations,autoCC09,autoCC97,autoCC01,autoCC03,autoCC14}. 

In this work, we show how to use communication complexity to obtain time-space complexity results for two-way finite automata. We develop a technique,  namely  ``lifting technique'',  that automatically lifts separations of query complexity to separations of time-space complexity of two-way finite automata. With this we immediately obtain separation results for finite automata from known separations in query model. 
Our technique can also be viewed as an application of    lifting theorems in communication complexity to the study of time-space complexity of finite automata. 

Below we briefly discuss the lifting technique, and summarize its applications and limitations.

\subsection{The lifting technique}   \label{sec:technique}

Let 
    $h: \{0,1\}^p \to \{0,1\}$
be a Boolean function. Let $D(h)$ and $Q(h)$ denote the deterministic query complexity and the quantum query complexity of $h$, respectively (see Definition in Section \ref{sec:Query}). Suppose we know a separation in the query complexity world,
    \begin{equation}   \label{eq:sep-query}
        D(h) \gg Q(h),
    \end{equation}
where $\gg$ means that 
    $Q(h) = o(D(h))$.
Let $\ts_{2DFA}(L)$ and $\ts_{2QCFA}(L)$ denote the time-space complexity of recognizing language $L$ by a 2DFA and a 2QCFA, respectively (see Definition in Section \ref{def:2FA}). We will define a language $L$ depending on $h$ and show that 
    \begin{equation}   \label{eq:lift}
        \ts_{2DFA}(L) \gtrsim D(h) \quad \text{and} \quad
        \ts_{2QCFA}(L) \lesssim Q(h),
    \end{equation}
where the notation $\gtrsim$ and $\lesssim$ hide certain parameters. \eqref{eq:sep-query} and \eqref{eq:lift} together imply that
    \begin{equation}   \label{eq:sep-TS}
        \ts_{2DFA}(L) \gg \ts_{2QCFA}(L).
    \end{equation}
In this way, we ``lift'' a separation result for query complexity \eqref{eq:sep-query} to a separation for time-space complexity for two-way finite automata \eqref{eq:sep-TS}. Below we give some detail on \eqref{eq:lift}.

Let the alphabet $\Sigma = \{0,1, \#\}$. Every two-party Boolean function $f: \{0,1\}^n \times \{0,1\}^n \to \{0,1\}$ induces a language $L_f(n)$ defined as follows
    \begin{equation}  \label{eq:L-f}
        L_{f}(n) = \{x\#^ny  \in \Sigma^*\ | \ x, y \in \{0,1\}^n, f(x,y) = 1\}.
    \end{equation}

On one hand, we show that two-way finite automata recognizing $L_f(n)$ can be translated to communication protocols solving $f$.   Let $D^{cc}(f)$ and $R^{cc}(f)$ denote the deterministic communication complexity of $f$ and randomized communication complexity of $f$ with a bounded error, respectively. 

\begin{restatable}{thm}{thmlb} \label{thm:lb}
    $\ts_{2DFA}(L_f(n)) \ge n(D^{cc}(f) - 1)$,  
    $\ts_{2PFA}(L_f(n)) \ge n(R^{cc}(f) - 1)$.
\end{restatable}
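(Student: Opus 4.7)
The plan is to turn any 2DFA (resp.\ 2PFA) $M$ recognizing $L_f(n)$ into a deterministic (resp.\ public-coin bounded-error) communication protocol for $f$ whose cost is at most $Ts/n+1$, where $T$ is the worst-case running time of $M$ and $s=\lceil\log|Q|\rceil$. Rearranging gives $Ts\ge n(D^{cc}(f)-1)$ (resp.\ $Ts\ge n(R^{cc}(f)-1)$), which is exactly what the theorem asserts.

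First I would set up the following partition of the $3n$-cell tape of $M$ on input $w=x\#^n y$: the \emph{$x$-region} is cells $1,\ldots,n$, the \emph{hash region} is cells $n+1,\ldots,2n$, and the \emph{$y$-region} is cells $2n+1,\ldots,3n$. The core combinatorial observation is a \emph{hash-block bottleneck}: every time the head moves from the $x$-region to the $y$-region, or vice versa, it must traverse the entire $n$-cell hash region, so each such ``switch'' uses at least $n$ computation steps of $M$. Letting $S$ denote the number of switches during the run on $(x,y)$, this immediately yields $Sn\le T$, and hence $S\le T/n$.

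Next I would describe the protocol as a direct joint simulation of $M$. Alice holds $x$, Bob holds $y$, and both know the transition table of $M$; they maintain the configuration of $M$ together, with the party owning the region currently containing the head advancing the simulation locally. The crucial point is that while the head sits inside the hash region it reads only $\#$, so its entire trajectory from entry to exit depends solely on the entry state and on $M$'s transition table, and can be computed locally by either party. Each entry into the hash region therefore resolves, without any communication, to one of three outcomes: $M$ halts inside, the head exits back to the same side (no switch), or the head exits to the opposite side (a switch). In the switch case the party that just lost the head transmits the $s$-bit exit state; when $M$ halts, whoever observed the halt sends the $1$-bit accept/reject decision. The total bit exchange is thus at most $Ss+1\le Ts/n+1$, as desired.

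For the 2PFA statement the construction is identical, with Alice and Bob using shared random bits to simulate the probabilistic transitions of $M$; the hash-block collapse still goes through, since it only uses that the hash cells carry no input-dependent information, and the resulting public-coin protocol inherits the error bound of $M$. The part I expect to require the most care is the probabilistic bookkeeping: confirming that the worst-case running time $T$ of a 2PFA really forces $S\le T/n$ on every random realization (immediate if $T$ is a strict worst-case bound, otherwise needing a Markov-style truncation at the cost of a small extra error). Modulo this, both inequalities follow from the same hash-block bottleneck argument.
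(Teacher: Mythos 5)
Your argument is essentially the paper's proof of Lemma~3.1: both simulate $\cM$ jointly, have the party holding the head advance the computation, exchange the $\ssp(\cM)$-bit state only when the head crosses over the $\#$-block, and use the bottleneck observation that each such crossing forces at least $n$ steps, giving at most $\ttm(\cM)/n$ crossings and hence communication cost at most $\ts(\cM)/n+1$. Your framing of the hash block as a neutral zone that either party can simulate locally is a cleaner way to say the same thing as the paper's choice of partial inputs $x\#^{n}$ and $\#^{n}y$. One small caution: for the 2PFA case you invoke shared random bits, which makes the protocol \emph{public-coin} and therefore only bounds public-coin randomized complexity, whereas the paper's $R^{cc}$ is explicitly private-coin. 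Shared coins are in fact unnecessary: the party currently holding the head flips its \emph{own} coins to resolve the probabilistic transitions (including the trajectory through the hash block), and the other party never needs to see them; this gives a private-coin protocol and yields $\ts_{2PFA}(L_f(n)) \ge n(R^{cc}(f)-1)$ directly, with no Markov truncation needed once $\ttm(\cM)$ is read as a strict worst-case bound as in the paper's Definition~2.4.
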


Given a function 
    $h: \{0,1\}^p \to \{0,1\}$ 
and a two-party function 
    $g: \{0,1\}^m \times \{0,1\}^m \to \{0,1\}$, 
they naturally define another two-party function 
    $f: \{0,1\}^{pm} \times  \{0,1\}^{pm} \to \{0,1\}$
as follows,
    \begin{equation}  \label{eq:def-composition}
        f(x,y) = h(g(x_1,y_1), g(x_2,y_2), \ldots, g(x_p,y_p)),
    \end{equation}
where 
    $(x_i, y_i) \in  \{0,1\}^m \times \{0,1\}^m$. 
We denote \eqref{eq:def-composition} by 
    $f = h \circ g$, 
i.e., $f$ is a composed function of $h$ and $g$. 

Let $\IP_m$ denote the two-party inner product function 
    $\IP_m(a,b)=\sum_{j=1}^m a_j b_j \mod 2$, 
where 
    $a,b\in \{0,1\}^m$.  
Applying appropriate lifting theorems in communication complexity, Theorem \ref{thm:lb} implies the following corollary which corresponds to the first part of \eqref{eq:lift}. Let $R(h)$ denote the randomized query complexity of $h$. Let $\ts_{2PFA}(L)$ denote the time-space complexity of recognizing language $L$ by a  2PFA.

\begin{restatable}{corollary}{corlb} \label{cor:lb}
    Let
        $h: \{0,1\}^p \to \{0,1\}$
    and
        $f = h \circ \IP_m$
    where $m= \Theta(\log p)$. 
    Let
        $n=pm$
    be the input size for $f$,
    then
    \begin{equation}  \label{eq:det-lifting}
        \ts_{2DFA}(L_f(n)) \ge \widetilde{\Omega}(n D(h))
    \end{equation}
    and
    \begin{equation}  \label{eq:random-lifting}
        \ts_{2PFA}(L_f(n)) \ge \widetilde{\Omega}(n R(h)).
    \end{equation}
\end{restatable}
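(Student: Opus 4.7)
The strategy is to combine Theorem~\ref{thm:lb} with existing query-to-communication lifting theorems for the inner-product gadget. Theorem~\ref{thm:lb} already provides
\[
\ts_{2DFA}(L_f(n)) \ge n(D^{cc}(f) - 1), \qquad \ts_{2PFA}(L_f(n)) \ge n(R^{cc}(f) - 1),
\]
so the remaining task is to lower bound $D^{cc}(h \circ \IP_m)$ and $R^{cc}(h \circ \IP_m)$ in terms of $D(h)$ and $R(h)$, which is exactly what a lifting theorem does. No new automata-theoretic or combinatorial input is required at this stage; the content of the corollary is the packaging of Theorem~\ref{thm:lb} together with a known lifting machinery.

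The plan is to invoke an IP-gadget lifting theorem, for example the deterministic lifting theorem of Chattopadhyay--Filmus--Koroth--Meir--Pitassi and its randomized analogue, which state: there is an absolute constant $c$ such that whenever the gadget size satisfies $m \ge c \log p$,
\[
D^{cc}(h \circ \IP_m) = \Omega(D(h) \cdot m) \quad \text{and} \quad R^{cc}(h \circ \IP_m) = \Omega(R(h) \cdot m).
\]
Since the hypothesis fixes $m = \Theta(\log p)$ with the implicit constant chosen large enough, both lifting theorems apply. Plugging the deterministic lifting bound into Theorem~\ref{thm:lb} gives $\ts_{2DFA}(L_f(n)) = \Omega(n \cdot m \cdot D(h))$, and since $n = pm$ with $m = \Theta(\log p) = \Theta(\log n)$, this is certainly $\widetilde{\Omega}(n D(h))$, proving \eqref{eq:det-lifting}. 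The bound \eqref{eq:random-lifting} follows by the identical substitution using the randomized lifting theorem and the randomized half of Theorem~\ref{thm:lb}.

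The main obstacle I would anticipate is purely bookkeeping: I must check that the constant hidden in $m = \Theta(\log p)$ exceeds the threshold required by the low-discrepancy hypothesis of the randomized lifting theorem, and that the bounded-error regime used to define $R(h)$ is compatible (after possible constant-factor error reduction) with the bounded-error regime assumed by the randomized half of Theorem~\ref{thm:lb}. Both are routine as long as constants are fixed consistently. The genuinely hard work, namely the lifting theorems themselves, is imported as a black box; the contribution of the corollary is to demonstrate that these lifts pass cleanly through the reduction from two-way finite automata to communication protocols encoded in Theorem~\ref{thm:lb}.
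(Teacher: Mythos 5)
Your proposal matches the paper's proof almost exactly: both apply Theorem~\ref{thm:lb} and then import the deterministic and randomized query-to-communication lifting theorems for the $\IP_m$ gadget (the paper cites Chattopadhyay--Koroth--Meir and Chattopadhyay et al.\ for precisely these), concluding via the observation that $n = pm$ with $m = \Theta(\log p)$ makes the factor $m$ absorbable into the $\widetilde{\Omega}$. The bookkeeping caveats you flag (gadget-size constant, error-regime compatibility) are the only points not made explicit in the paper, and they are indeed routine.
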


\begin{remark}
The reason for choosing $g= \IP_m$ is that to prove Corollary \ref{cor:lb} we need to use some lifting theorems in communication complexity which requires choosing $g=\IP_m$ (or some other functions satisfying certain properties, but $\IP_m$ is a simple and valid choice). We refer the interested reader to  \cite{chattopadhyay2019query,chattopadhyay2019simulation}.
\end{remark}

On the other hand, we show that quantum query algorithms for $h$ can be translated to two-way finite automata recognizing $L_f(n)$ where 
    $f = h \circ g$. 
Let $Q_{\epsilon}(h)$ denote the quantum query complexity of $h$ where the error is at most $\epsilon$. Let $Q_E(h)$ denote the quantum query complexity of $h$ where there is no error.  Let $\ts_{2QCFA,exact}(L)$ denote the time-space complexity of recognizing $L$ by a 2QCFA with no error. The following theorem corresponds to the second part of \eqref{eq:lift}.

\begin{restatable}{thm}{thmub} \label{thm:ub}
    Let 
        $f = h \circ g$ 
    be of the form \eqref{eq:def-composition}, let 
        $n=pm$ 
    be the input size for $f$. Then,
    \begin{equation}   \label{eq:Q-ub}
        \ts_{2QCFA}(L_f(n)) = O\Big(Q_\epsilon(h) \cdot n \cdot (\log p + m)\Big),
    \end{equation}
    and the error for the 2QCFA is at most $\epsilon$.  In particular, setting $\epsilon = 0$ and $m=\Theta(\log p)$, then 
    \begin{equation}   \label{eq:Qexact-ub}
        \ts_{2QCFA,exact}(L_f(n)) = O\Big(Q_E(h) \cdot n \cdot \log n\Big) = \widetilde{O} \Big(n Q_E(h) \Big).
    \end{equation}
\end{restatable}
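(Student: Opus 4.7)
The plan is to have the 2QCFA $\cM$ simulate, step by step, an optimal quantum query algorithm $\cA$ for $h$ on the ``virtual'' input $z = (g(x_1,y_1),\ldots,g(x_p,y_p)) \in \{0,1\}^p$. Write $T = Q_\epsilon(h)$; the algorithm $\cA$ acts on registers $|i\rangle|b\rangle|w\rangle$ (index $i \in [p]$, value bit $b$, workspace $w$ of $O(\log p)$ qubits), alternating between $O(1)$-time, input-independent unitaries $U_0,U_1,\ldots,U_T$ and $T$ applications of the oracle $O_z\colon |i\rangle|b\rangle \mapsto |i\rangle|b\oplus z_i\rangle$. The 2QCFA carries $\cA$'s entire state in its quantum register, augmented by two scratch registers $|u\rangle,|v\rangle \in (\C^2)^{\otimes m}$ that will hold $x_i$ and $y_i$ in superposition during an oracle call. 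One cheap preprocessing sweep verifies that the input has the legal form $x\#^n y$, rejecting otherwise.

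The technical core is the implementation of one oracle call $O_z$ in $O(n)$ time, achieved by four sweeps of the head. In the \emph{forward pass over $x$}, $\cM$ moves left-to-right through the first $n$ symbols while the classical control tracks the current block index $i$ and in-block offset $k$; at each step it applies the fixed unitary that, conditioned on the index register equalling $|i\rangle$, XORs the scanned symbol into the $k$-th qubit of $|u\rangle$. After this pass the state is $\sum_i \alpha_i |i\rangle|b\rangle|w\rangle|x_i\rangle|0^m\rangle$. An analogous \emph{forward pass over $y$} yields $\sum_i \alpha_i |i\rangle|b\rangle|w\rangle|x_i\rangle|y_i\rangle$. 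A single fixed unitary $U_g$ on $|u\rangle|v\rangle|b\rangle$ then executes $b \mapsto b \oplus g(u,v)$, producing $\sum_i \alpha_i |i\rangle|b\oplus z_i\rangle|w\rangle|x_i\rangle|y_i\rangle$, and two reverse sweeps uncompute $|v\rangle$ and $|u\rangle$, restoring the scratch to $|0\rangle^{\otimes 2m}$. Each controlled operation depends only on the classically known pair $(i,k)$ and the currently scanned symbol, so the whole automaton uses only a finite alphabet of gate unitaries acting on a fixed-dimension quantum register.

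Each of the four sweeps costs $O(n)$ steps, so one oracle call costs $O(n)$ time and the full simulation costs $O(T\cdot n) = O(Q_\epsilon(h)\cdot n)$. For space, the classical control stores the pass index, query counter, block index $i$, and in-block offset $k$, using $O(\log p + \log m)$ bits; the quantum register has dimension $\mathrm{poly}(p)\cdot 2^{2m} = 2^{O(\log p + m)}$, contributing $O(\log p + m)$ bits. The total space is therefore $O(\log p + m)$, the output is correct with error at most $\epsilon$ because the final measurement reproduces $\cA$'s output distribution on $z$, and hence $\ts_{2QCFA}(L_f(n)) = O(Q_\epsilon(h)\cdot n\cdot (\log p + m))$. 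For the exact part, set $\epsilon = 0$ and $m = \Theta(\log p)$: then $\log p + m = \Theta(\log n)$ and the bound collapses to $O(Q_E(h)\cdot n\log n) = \widetilde{O}(n Q_E(h))$.

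The main obstacle I expect is verifying that the controlled-XOR operations really fit the 2QCFA model of \cite{2QCFA}: they must be implementable as fixed unitaries on the quantum register and selected deterministically by the classical control from a finite alphabet of gates, while the combined bookkeeping (head direction, pass index, query counter, block and offset counters) must stay within $O(\log p + m)$ classical space. The remaining work---tidy uncomputation of the scratch registers, correct handling of end-markers, and the routine argument that simulating a $T$-query quantum algorithm preserves its $\epsilon$ error---is standard.
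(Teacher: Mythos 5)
Your proposal is correct and follows essentially the same strategy as the paper: simulate the quantum query algorithm for $h$, implementing each oracle call by sweeping the head over the input to load the relevant block(s) into scratch quantum registers, applying a fixed gate for $g$, and then uncomputing the scratch. The only cosmetic differences are that the paper works out the $m=1$ case in detail with a single scratch qubit and a circular-tape re-read of $x$ to uncompute (XORing $x_i$ twice cancels) and then sketches the general $m$ case, whereas you use two $m$-qubit scratch registers with explicit reverse sweeps; both yield the same $O(Q_\epsilon(h)\cdot n)$ time and $O(\log p + m)$ space.
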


\begin{remark}
    This is a good place to discuss the definition \eqref{eq:L-f}. For Boolean function 
        $f: \{0,1\}^n \times \{0,1\}^n \to \{0,1\}$,
    one could define 
        $L_{f}(q) = \{x\#^qy  \in \Sigma^*\ | \ x, y \in \{0,1\}^n, f(x,y) = 1\}$,
    and then optimize over $q$ to get the largest separation. Let us consider $2DFA$ and $2QCFA$ as an example. With this new definition, ignoring lower order $\log$ terms, the result for Corollary \ref{cor:lb} becomes 
        $\ts_{2DFA}(L_f(q)) \ge \widetilde{\Omega}(q \cdot D(h))$,
    and the result for Theorem \ref{thm:ub} (see details in Theorem \ref{thm:m-is-1}) becomes 
        $\ts_{2QCFA}(L_f(q)) \le \widetilde{O}((q+n) \cdot Q_\epsilon(h))$.
    From Corollary \ref{cor:lb}, since 
        $n=pm$
    and $m=\Theta(\log p)$,
    we have
        $p = \widetilde{\Theta}(n)$.
    So, suppose we have a function $h$ for which
        \begin{equation}    \label{eq:sep_query}
            D(h) = \widetilde{\Omega}(n^{\alpha}), \quad
            Q_\epsilon(h) = \widetilde{O}(n^{\beta}), \quad
            \text{for  some }
            0 \le \beta \le \alpha \le 1.
        \end{equation}
    Then,
        $\ts_{2DFA}(L_f(q)) \ge \widetilde{\Omega}(q \cdot n^{\alpha})$
    and
        $\ts_{2QCFA}(L_f(q)) \le \widetilde{O}((q+n) \cdot n^{\beta})$.
    With a simple calculation, one finds that the largest separation is achieved at 
        $q=n$,
    in which case
        \begin{equation}    \label{eq:sep_TS}
            \ts_{2DFA}(L_f(q)) \ge \widetilde{\Omega}(n^{1+\alpha}),    \quad
            \ts_{2QCFA}(L_f(q)) \le \widetilde{O}(n^{1+\beta}).
        \end{equation}

\end{remark}

\subsection{Separation results for time-space complexity}   \label{sec:sep-results}

Let 
    $h: \{0,1\}^p \to \{0,1\}$
be a Boolean function.
Let
    $f = h\circ \IP_m$
where 
    $m=\Theta(\log p)$.
Let 
    $n=pm$
be the input size for $f$. Consider the language 
    $L_f(n)$
defined in \eqref{eq:L-f}. As described in Section \ref{sec:technique}, we apply Corollary \ref{cor:lb} and Theorem \ref{thm:ub} together to lift a separation for the query complexity of $h$ to a separation for the time-space complexity of $L_f(n)$. Using known separation results for total Boolean functions, we obtain the following separations for time-space complexity for two-way finite automata.

\begin{restatable}{thm}{thmSepViaQuery} \label{thm:separation-via-query}
    In each of the following cases, there exist a total function $f: \{0,1\}^n \times \{0,1\}^n \to \{0,1\}$ (the function $f$ is different for each case),  such that 
    \begin{enumerate}[(1)]
    \item $\ts_{2DFA}(L_f(n)) = \widetilde{\Theta} (n^2)$ and  $\ts_{2QCFA}(L_f(n)) \le \widetilde{O}(n^{5/4})$.
    
    \item $\ts_{2DFA}(L_f(n)) = \widetilde{\Theta} (n^2)$ and  $\ts_{2QCFA,exact}(L_f(n)) \le \widetilde{O}(n^{3/2})$.
    
    \item $\ts_{2PFA}(L_f(n)) = \widetilde{\Theta} (n^2)$ and  $\ts_{2QCFA}(L_f(n)) \le \widetilde{O}(n^{4/3})$.
    
    \item $\ts_{2PFA}(L_f(n)) = \widetilde{\Theta}(n^2)$ and  $\ts_{2QCFA,exact}(L_f(n)) \le \widetilde{O}(n^{5/3})$.
    \end{enumerate}
\end{restatable}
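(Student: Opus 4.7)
The plan is to combine the lifting framework---Corollary \ref{cor:lb} for the classical lower bounds and Theorem \ref{thm:ub} for the quantum upper bounds---with four known total-function query-complexity separations. For each of the four cases I would pick a total Boolean function $h\colon\{0,1\}^p\to\{0,1\}$ realizing the appropriate gap, set $m=\Theta(\log p)$, form $f=h\circ \IP_m$ of input length $n=pm$ (so that $p=\widetilde\Theta(n)$), and then read off the time-space bounds for the language $L_f(n)$ defined in \eqref{eq:L-f}.

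On the classical side, Corollary \ref{cor:lb} yields $\ts_{2DFA}(L_f(n))\ge\widetilde\Omega(nD(h))$ and $\ts_{2PFA}(L_f(n))\ge\widetilde\Omega(nR(h))$. In cases (1)--(2) the chosen $h$ will have $D(h)=\widetilde\Omega(p)$, and in (3)--(4) it will have $R(h)=\widetilde\Omega(p)$, each of which produces the claimed $\widetilde\Omega(n^2)$ lower bound. The matching $\widetilde O(n^2)$ upper bound is essentially trivial: for each $n$ a 2DFA can memorize the $n$-bit prefix $x$ using $2^n$ states and then scan $\#^n y$ to output $f(x,y)$, giving time $O(n)$ and space $O(n)$, hence $TS=O(n^2)$. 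This supplies the upper side of $\widetilde\Theta(n^2)$ for both 2DFA and 2PFA.

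On the quantum side, Theorem \ref{thm:ub} with $m=\Theta(\log n)$ specializes to $\ts_{2QCFA}(L_f(n))=\widetilde O(nQ_\epsilon(h))$ and $\ts_{2QCFA,exact}(L_f(n))=\widetilde O(nQ_E(h))$. It therefore suffices to supply total Boolean functions $h$ with: (1) $D(h)=\widetilde\Omega(p)$ and $Q(h)=\widetilde O(p^{1/4})$; (2) $D(h)=\widetilde\Omega(p)$ and $Q_E(h)=\widetilde O(p^{1/2})$; (3) $R(h)=\widetilde\Omega(p)$ and $Q(h)=\widetilde O(p^{1/3})$; and (4) $R(h)=\widetilde\Omega(p)$ and $Q_E(h)=\widetilde O(p^{2/3})$. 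The first three are standard separations coming from the pointer-function and cheat-sheet line of work in query complexity.

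The main technical obstacle I expect is case (4): exhibiting a total function whose randomized query complexity is a $3/2$-power larger than its exact quantum query complexity. My plan here would be to start from a cheat-sheet construction of the type used to obtain the cubic $R$-vs-$Q$ gap in case (3), and to verify that the underlying bounded-error quantum subroutine can be promoted to an exact one with at most a $\widetilde O(p^{1/3})$ blow-up in query cost, either by directly designing an exact version of the Grover-style check used to verify cheat sheets, or by invoking a generic bounded-error-to-exact conversion available for problems with a one-sided acceptance gap. Once the four functions are identified, the remainder of the theorem is pure bookkeeping: substitute the asymptotics of $D(h)$, $R(h)$, $Q(h)$, $Q_E(h)$ into Corollary \ref{cor:lb} and Theorem \ref{thm:ub} and use $n=\widetilde\Theta(p)$.
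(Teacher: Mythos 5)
Your framework matches the paper exactly: use Corollary~\ref{cor:lb} for the classical lower bounds, Theorem~\ref{thm:ub} for the quantum upper bounds, take $f = h\circ \IP_m$ with $m = \Theta(\log p)$, and instantiate $h$ with known query separations. Your list of the four required $h$-separations is also the correct list, and your $\widetilde O(n^2)$ upper bound for $\ts_{2DFA}$/$\ts_{2PFA}$ by memorization is the same easy argument. For cases (1)--(3) you correctly note these are off-the-shelf results (the pointer-function paper of Ambainis et al.\ gives (1) and (2), and Sherstov et al.\ gives (3)).

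The gap is in case~(4). You treat the $R(h)=\widetilde\Omega(p)$ versus $Q_E(h)=\widetilde O(p^{2/3})$ separation as an open technical obstacle and sketch a speculative construction: start from the cheat-sheet function of case~(3) and try to convert its bounded-error quantum algorithm into an exact one with only a $\widetilde O(p^{1/3})$ query blow-up. This would not work as stated---there is no generic bounded-error-to-exact conversion with sub-polynomial overhead, and the cheat-sheet verification step in the Aaronson--Ben-David--Kothari/Sherstov constructions relies on Grover-style amplitude amplification whose acceptance is two-sided, not one-sided. More importantly, the detour is unnecessary: the separation you need in (4) is already a theorem. Ambainis et al.\ (the same pointer-function paper \cite{ambainis2017separations} you invoke for (1) and (2)) explicitly prove a total function with $R(h)=\widetilde\Omega(p)$ and $Q_E(h)=O(p^{2/3})$, which is exactly what the paper cites. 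Replacing your speculative step for case~(4) with that citation closes the proof and makes it identical to the paper's.
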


However, the functions $h$ involved in the  Theorem  \ref{thm:separation-via-query} are all quite complicated. In Section \ref{sec:sep-results-proof} we give two other simple examples with weaker separations, in these examples the languages are explicit.

\begin{remark}
    Instead of total functions, one could consider partial functions $h$. 
    Since the lifting theorem we use (e.g., \cite{chattopadhyay2019query}) also holds when $h$ is a \emph{partial} function, and the simulation theorem Theorem \ref{thm:ub} also holds (it will become clear in the proof of Theorem \ref{thm:m-is-1}) when $h$ is a partial function,  we can also obtain separations of time-space complexity when $f$ are partial functions. For example, \cite{Forrelation} constructs a partial function $h$ on $n$ bits such that
        $R(h) \ge \widetilde{\Omega}(\sqrt{n})$
    and
        $Q_\epsilon(h) \le O(1)$.
    By adapting the discussion in \eqref{eq:sep_query} and \eqref{eq:sep_TS}, we can show that there is a partial function $f$ such that
        $\ts_{2PFA}(L_f(n)) = \widetilde{\Theta} (n^{3/2})$ 
    and  
        $\ts_{2QCFA}(L_f(n)) \le \widetilde{O}(n)$.
    However, by making a change of variable $n=N^{4/3}$, this separation is just as strong as (3) of Theorem \ref{thm:separation-via-query}. For separations of $Q_E$ from $D$ or $R$, we are not aware of separations by partial functions that are stronger than total functions, as a result, we can not improve (2) or (4) of Theorem \ref{thm:separation-via-query} by partial functions.
\end{remark}

\subsection{The limitations of the lifting technique}    \label{sec:lift-limitation}

A minor limitation of the lifting technique is that we require $m=\Theta(\log p)$ in Corollary \ref{cor:lb}, this restriction renders the lower bound to be of the form $\widetilde{\Omega}$ instead of $\Omega$, i.e., we always lose a $\log n$ factor in the lower bound obtained in this way. This limitation comes from the     current limitation of lifting theorems in communication complexity, which requires that $m \ge \Omega(\log p)$. 

Another limitation of the lifting technique is the following. Different versions of query complexity, such as $D(h), R(h), Q_E(h)$ and $Q(h)$, are known to be polynomially related for total functions (see, e.g., \cite{BdW02}). In fact, Aaronson{\it et al.~}\cite{aaronson2020quantum} showed that $D(h) \le O(Q(h)^4)$ for every $h$. Hence, among 2DFA, 2PFA, 2QCFA and 2QCFA without error, the separation of time-space complexity  obtainable from the above lifting method could only be $\Omega(n^2)$ and $O(n^\alpha)$ for some $\alpha \ge 5/4 > 1$, if we use only total functions. In particular, since $D(h) \le O(R(h)^3)$ (see \cite{BdW02}), the above lifting method could prove, at most, that some language $L$ satisfies $\ts_{2DFA} \ge \Omega(n^2)$ and $\ts_{2PFA}(L) \le O(n^{4/3})$. However, this limitation might possibly be overcome by using better separations of partial functions.

The next theorem shows that sometimes a better separation for time-space complexity can be obtained  by \emph{directly} simulating communication protocols using two-way finite automata. Let 
\begin{equation}   \label{eq:L-EQ}
L_{\EQ}(n) = \{x\#^ny \ | \ x, y \in \{0,1\}^n, x=y\}.
\end{equation}
This language separates the time-space complexity of 2DFA and 2PFA with tight bounds.

\begin{restatable}{thm}{thmLEQ} \label{thm:sep-L-EQ}
    $\ts_{2DFA}(L_{\EQ}(n)) = \Theta(n^2)$,  $\ts_{2PFA}(L_{\EQ}(n)) = \Theta(n\log n)$.
\end{restatable}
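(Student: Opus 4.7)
The plan is to apply Theorem~\ref{thm:lb} with $f=\EQ_n$ for the two lower bounds, and then to exhibit explicit automata matching these lower bounds. The key facts from communication complexity I will invoke are the classical identities $D^{cc}(\EQ_n) = n+1$ (proved by a fooling-set argument on the $2^n$ pairs $(x,x)$) and $R^{cc}(\EQ_n) = \Theta(\log n)$ in the private-coin model. Substituting these into Theorem~\ref{thm:lb} yields immediately $\ts_{2DFA}(L_{\EQ}(n)) \ge n(D^{cc}(\EQ_n)-1) = \Omega(n^2)$ and $\ts_{2PFA}(L_{\EQ}(n)) \ge n(R^{cc}(\EQ_n)-1) = \Omega(n \log n)$.

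For the 2DFA upper bound, I would construct a one-way DFA (which is a special case of a 2DFA): while reading the prefix $x$ it memorizes the bits seen so far; it then counts the $n$ symbols $\#$; finally it compares $y$ to the stored $x$ bit-by-bit in a single left-to-right sweep. The state carries (phase, position, stored prefix of $x$), giving $|Q| = O(n\cdot 2^n)$, so $\log|Q|=O(n)$, while the time is a single pass of length $3n$. The time--space product is $O(n)\cdot O(n) = O(n^2)$, matching the lower bound.

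For the 2PFA upper bound, the plan is to implement a Rabin--Karp fingerprint inside the automaton. First verify the syntactic shape $\{0,1\}^n\#^n\{0,1\}^n$ with an $O(n)$-state counter. Then use $O(\log n)$ coin flips to select an index $i$ into a hard-coded enumeration $p_1,\dots,p_K$ of primes $p\le n^2$ (so $K = \Theta(n^2/\log n)$), and store the chosen prime $p$ as part of the state. Traverse $x$ once to compute $s_x \gets (2s_x + x_j)\bmod p$, store $s_x$ in a register, traverse $y$ similarly to obtain $s_y$, and accept iff $s_x = s_y$. If $x=y$ the machine always accepts; if $x\neq y$ then $|x-y|<2^n$ has at most $n$ distinct prime factors, while there are $\Theta(n^2/\log n)$ candidate primes, so the error is $O(\log n / n) = o(1)$. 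The state set is a Cartesian product of a constant number of counters of size $O(n)$, the prime $p\le n^2$, and residues $s\in\{0,\dots,p-1\}$, giving $|Q| = \text{poly}(n)$, hence $\log|Q|=O(\log n)$; the total running time is $O(n)$. The time--space product is thus $O(n\log n)$.

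The main obstacle is the 2PFA construction: one must ensure both that picking a prime from a poly-size list is realizable with $O(\log n)$ coin tosses and only polynomially many states (which is why encoding the enumeration into the transition table is preferable to on-the-fly primality testing), and that the polynomial state bound does not hide a factor of $n^{1+\varepsilon}$ that would spoil the $\log n$ factor in the final product. The 2DFA construction and the communication lower bounds are standard and routine once Theorem~\ref{thm:lb} is in hand.
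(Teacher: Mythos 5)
Your proposal follows essentially the same route as the paper: lower bounds from Theorem~\ref{thm:lb} with $D^{cc}(\EQ_n)=n+1$ and $R^{cc}(\EQ_n)=\Theta(\log n)$, a one-pass memorize-and-compare automaton for the 2DFA upper bound, and a 2PFA that simulates the standard private-coin fingerprinting protocol by choosing a random prime $p\le n^2$ and comparing $x\bmod p$ with $y\bmod p$. The paper's Lemma~\ref{lem:ccprotocol-to-automaton-EQ} constructs exactly this 2PFA with the same error analysis via the prime number theorem, so the two arguments coincide.
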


Observe that the $\EQ_n(x,y)$ can be viewed as a composed function of the form \eqref{eq:def-composition} where $h$ is the $\AND_n$ function and $g$ is the negation of the xor function, in particular $m=1$. Because $m=1 \neq \Theta(\log n)$, we can not apply Corollary \ref{cor:lb} to obtain the lower bound for $\ts_{2DFA}(L_{\EQ}(n))$ from the lower bound on the query complexity of $\AND_n$. 
Instead, we need to apply Theorem \ref{thm:lb} directly with a lower bound on the communication complexity of $\EQ_n$.

The rest of the paper is organized as follows. Section \ref{sec:def} defines  formally  all the related two-way finite automata,  time-space complexity, quantum query algorithms and the two-party communication model.  Section \ref{sec:Simulation} proves the theorems discussed in Section \ref{sec:technique} and Theorem \ref{thm:sep-L-EQ}. Section \ref{sec:sep-results-proof} proves Theorems \ref{thm:separation-via-query}. The last section discusses our method and some open problems. 

\section{Definitions}  \label{sec:def}

\subsection{Two-way finite automata and time-space complexity} \label{def:2FA}

We assume familiarity with (1-way) deterministic finite automaton (DFA), which is denoted by a tuple $(S, \Sigma, \delta, s_0, S_{acc}, S_{rej})$, where $S$ is a finite set of \emph{states}, $\Sigma$ is a finite set of input symbols called the \emph{alphabet}, $\delta$ is a \emph{transition function}, $s_0 \in S$ is an \emph{initial state}, and $S_{acc}, S_{rej}\subseteq S$ are the sets of \emph{accepting states} and \emph{rejecting states}, respectively (hence $S_{acc} \cap S_{rej} = \emptyset$). Let $\Sigma^*$ denote the set of all possible strings over the alphabet $\Sigma$.

Informally, a two-way deterministic finite automaton (2DFA) is  a DFA whose tape head can move to the left, be stationary, or move to the right. A two-way probabilistic finite automaton (2PFA), firstly defined in \cite{def2PFA}, has access to randomness in choosing the next state. 

\begin{definition} \label{def:2DFA-2PFA}
A 2DFA is a tuple $\cM=(S, \Sigma, \delta, s_0, S_{acc}, S_{rej})$.  The input to $\cM$ is of the form $\vertc w\$$ where $w\in \Sigma^*$, and $\vertc, \$ \not\in \Sigma$ are two special symbols  called the left end-marker and right end-marker, respectively.  The transition function $\delta$ is of the following form
\[
\delta: S \times \Big(\Sigma\cup \{\vertc, \$\} \Big) \to S \times \{-1,0,1\}, 
\quad
(s, \sigma) \mapsto (s', d),
\]
which means that when $\cM$ is currently in state $s$ and reads a symbol $\sigma$, its state changes to $s'$ and the tape head moves according to $d$. The value of $d=-1, 0, 1$ corresponds to move to the left, be stationary, and move to the right, respectively. $\cM$ accepts a string $w\in \Sigma^*$  if when $\cM$ runs on $\vertc w \$$, the final state is an accepting state in $S_{acc}$. A language $L \subseteq \Sigma^*$ is recognized by $\cM$ if $\cM$ accepts every $w\in L$ and rejects every $w\not\in L$.

A 2PFA is also a tuple $\cM=(S, \Sigma, \delta, s_0, S_{acc}, S_{rej})$, where the transition function $\delta$ is of the following form
\[
\delta: S \times \Big(\Sigma\cup \{\vertc, \$\} \Big) \times S \times \{-1,0,1\} \to [0,1],
\quad
(s, \sigma, s', d) \mapsto p,
\]
which means that when $\cM$ is currently in state $s$ and reads a symbol $\sigma$, there is a probability $p$ such that its state changes to $s'$ and the tape head moves according to $d$.  $\cM$ accepts  $w \in \Sigma^*$   with probability $1-\epsilon$ if when $M$ runs on $\vertc w \$$, $\Pr[s_f \in S_{acc}] \ge 1-\epsilon$ where $s_f$ is the random final state.  A language $L \subseteq \Sigma^*$ is said to be recognized by $\cM$ with error $\epsilon$  if every $w\in L$ is accepted by $\cM$ with probability at least $1-\epsilon$, and every $w\not\in L$ is accepted by $\cM$ with probability at most $\epsilon$. 
\end{definition}

Quantum finite automata were proposed by   Kondacs and Watrous \cite{2QFA} and also by Moore and Crutchfields \cite{Moo97} and  further studied in \cite{Amb98,bertoni2003quantum,bertoni2006some,bianchi2017quantum,li2015hybrid,nishimura2009application,yamakami2014one}. 
2QCFA were introduced by Ambainis and Watrous \cite{2QCFA} and further studied in \cite{ambainis2009improved,yakaryilmaz2009succinctness,zheng2015power,zheng2013state}.  

Intuitively, a 2QCFA is a 2DFA that has access to  a fixed-size quantum register, upon which it may perform quantum transformations and measurements. Below we give its formal definition, see \cite{2QCFA} for detail.

\begin{definition}[\cite{2QCFA}] \label{def:2QCFA}
A 2QCFA is a tuple $M=(Q, S, \Sigma, \Theta, \delta, q_0, s_0, S_{acc}, S_{rej})$, where $S, \Sigma, \delta, s_0, S_{acc}, S_{rej}$ are similar to 2DFA as in Definition \ref{def:2DFA-2PFA}, whereas $Q$ is a finite set of quantum basis states, $\Theta$ is the quantum transition function that governs the quantum portion of $\cM$, and $q_0$ is the quantum initial state. Specifically, suppose $\cM$ is currently at quantum state $\ket{\phi}$, at classical state $s\in S-S_{acc} - S_{rej}$, and reads a symbol $\sigma$, then $\Theta(s,\sigma)$ can be either a unitary transformation or an orthogonal measurement that acts on $\ket{\phi}$. 
\begin{itemize}
\item[(1)] If $\Theta(s,\sigma)$ is a unitary transformation, then firstly $\cM$ changes  its quantum state  from $\ket{\phi}$ to $\Theta(s,\sigma) \ket{\phi}$, and then changes its classical state from $s$ to  $s'$ and tape head  moves to direction $d$ where $\delta(s,\sigma) = (s',d)$;
\item[(2)] If   $\Theta(s,\sigma)$ is an orthogonal measurement, then $\delta(s,\sigma)$ is a mapping from the set of possible results of the measurement to $S \times \{-1,0,1\}$. $\cM$ firstly collapses its quantum state from $\ket{\phi}$ to a result of the measurement, say $\ket{\varphi}$,  then the classical state and head change according to $\delta(s,\sigma)(\varphi)$.
\end{itemize}

 $\cM$ halts when its classical state reaches an accepting state in $S_{acc}$ or a rejecting state in $S_{rej}$. Since the result of every quantum measurement is probabilistic, the change of the classical state would also be probabilistic. The strings that are accepted by $\cM$ with probability $1-\epsilon$, and the language recognized by $\cM$ with error $\epsilon$, are both defined in the same way as for 2PFA in Definition \ref{def:2DFA-2PFA}.
\end{definition}

Finally, we define the time-space complexity of two-way finite automata.

\begin{definition}  \label{def:TS}
For every two-way finite automaton $\cM$ (can be 2DFA, 2PFA, or 2QCFA), let $\lang(\cM)$ denote the language that it recognizes (possibly with an error $\epsilon\ge 0$). Define $\ttm(\cM)$ to be the time complexity of $\cM$, i.e., the maximal number of steps it takes for $\cM$ to accept a string in $\lang(\cM)$. Define $\ssp(\cM)$ to be the space complexity of $\cM$, i.e., the $\log$ of the number of states of $\cM$ (for 2QCFA, space complexity is the number of qubits plus the $\log$ of the number of classical states). Let $\ts (\cM) = \ttm(\cM) \times \ssp(\cM)$ denote the product of time complexity and space complexity of $\cM$, call it the time-space complexity of $\cM$.

Given a language $L$, define its 2DFA time-space complexity, denoted by $\ts_{2DFA}(L)$,
\begin{equation}   \label{eq:ts-2DFA}
\ts_{2DFA}(L) = \inf_{\cM:\ \lang(\cM) =L} \ts(\cM),
\end{equation}
where $\cM$ ranges over all possible 2DFAs that recognize $L$. 

Define $\ts_{2PFA}(L)$ and $\ts_{2QCFA}(L)$ similarly as \eqref{eq:ts-2DFA}, where the automata ranges over all possible 2PFA and 2QCFA, respectively, with a bounded error. 

Define $\ts_{2QCFA, exact}(L)$ similarly as in \eqref{eq:ts-2DFA}, where the automata ranges over all possible 2QCFA with no error.
\end{definition}

We will be interested in comparing the time-space complexity  of 2DFA, 2PFA, and 2QCFA for some special languages.

Let $w\in \Sigma^*$ and $n= |w|$ be its length. Since the input for two-way finite automata is of the form $\vertc w \$$, it will be convenient to index the symbols of the input via index $0, 1, \ldots, n, n+1$ where $\vertc$ is at position $0$ and $\$$ is at position $n+1$.  This convention will be used in the rest of the paper.

\subsection{Quantum query algorithms} \label{sec:Query}
We assume familiarity with basic notions in quantum computing such as unitary transformation and measurement. Below we briefly define quantum query algorithm and complexity. For detail on classical and quantum query complexity, see \cite{queryICM2018,BdW02, deWolfnotes}. 

Given a function $f: \{0,1\}^n \to \{0,1\}$ and an input $x \in \{0,1\}^n$. A quantum query $\cO_x$ is defined as the following  unitary transformation working on two registers, 
\begin{equation}  \label{eq:quantumquery}
\cO_x: \ket{i, b} \mapsto \ket{i, b\oplus x_i},
\end{equation}
where $i = 1,2,\ldots,n$ and $b\in \{0,1\}$. Sometimes a quantum query algorithm may use another working register, in which case the quantum query $\cO_x$ is defined as $\cO_x \ket{i,b,w} = \ket{i, b\oplus x_i, w}$. A $t$-query algorithm starts from some initial state, say $\ket{\phi}$, and performs unitary transformations and quantum queries successively and reaches a final state $\ket{\varphi} = U_t \cO_x U_{t-1} \cO_x \cdots U_1 \cO_x U_0 \ket{\phi}$, upon which a quantum measurement is performed and outputs accordingly. The \emph{quantum query complexity} of a  function $f$, denoted by $Q_{\epsilon}(f)$, is the minimal number of queries needed for a quantum query algorithm that for every $x$ outputs the value of $f(x)$  with probability of error at most $\epsilon$.  Let $Q(f)$ denote $Q_{1/3}(f)$. Let $Q_E(f)$ denote the \emph{exact} quantum query complexity of $f$, i.e., the algorithm should output the correct value of $f$ with probability $1$. See \cite{Amb13,ambainis2014exact,montanaro2015exact,qiu2020revisiting} for details for exact query complexity.   




\subsection{Communication complexity} \label{def:CC}
We briefly define the standard classical two-party communication  model  introduced by Yao \cite{Yao79}, for related works  see \cite{anshu2016new,jain2016direct,KusNis97,RYCC2020}.

Given a two-party function $f: \{0,1\}^n \times \{0,1\}^n \to \{0,1\}$. Two players Alice and Bob both know the function $f$, Alice receives an input $x$ and Bob receives an input $y$. A communication protocol $\cP$ defines what bits Alice and Bob send to each other so that they both know the value $f(x,y)$ in the end. The communication protocol can be either one round or multiple rounds. Alice and Bob may do some individual computation before sending each bit to the other,  these individual computation are costless, the goal is to minimize the number of bits communicated. For example, the simplest protocol is that Alice sends her input $x$ to Bob, using $n$ bits, then Bob computes $f(x,y)$ and sends this value, using only $1$ bit, to Alice.  The total communication cost of this protocol is $n+1$ bits. 

In a deterministic communication protocol, the individual computation is deterministic, while in a randomized communication protocol,  each party can use their own randomness in their individual computation\footnote{This is called as the private coin communication model.}.  The deterministic communication complexity of a function $f$, denoted by $D^{cc}(f)$, is the least number of bits exchanged in any deterministic protocol that computes $f(x,y)$ correctly for every  input $(x,y)$. The randomized communication complexity of a function $f$ with error $\epsilon$, denoted by $R^{cc}_\epsilon(f)$, is the least number of bits in any randomized protocol  that computes $f(x,y)$ with probability of error at most $\epsilon$ for every input $(x,y)$. Obviously, $R^{cc}_\epsilon(f) \le D^{cc}(f) \le n+1$. We abbreviate $R^{cc}_{1/3}(f)$ simply as $R^{cc}(f)$.


We will  use the following fact on communication complexity of the equality function $\EQ_n$. The equality function $\EQ_n: \{0,1\}^n \times \{0,1\}^n \to \{0,1\}$ is defined as $\EQ_n(x,y)   = 1$ iff $x=y$.


\begin{lemma}[\cite{KusNis97,RYCC2020}]  \label{lem:cc-results}
$D^{cc}(\EQ_n) = n+1$, $R^{cc}(\EQ_n) = \Theta(\log n)$.\end{lemma}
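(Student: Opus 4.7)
The plan is to prove the four bounds (deterministic vs.\ randomized, upper vs.\ lower) separately, each via a different classical technique from communication complexity.

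For the deterministic upper bound $D^{cc}(\EQ_n)\le n+1$, I would just use the trivial protocol: Alice sends all $n$ bits of $x$, and Bob replies with the single bit $\EQ_n(x,y)$, for a total of $n+1$ bits. For the matching lower bound, I would use a fooling-set / rectangle argument: the diagonal $F=\{(x,x):x\in\{0,1\}^n\}$ is a $1$-fooling set of size $2^n$, since for any two distinct $x,y$ at least one of $(x,y),(y,x)$ evaluates to $0$, so $(x,x)$ and $(y,y)$ cannot sit in a common monochromatic $1$-rectangle. Any deterministic protocol tree therefore needs at least $2^n$ distinct leaves labelled $1$ and at least one leaf labelled $0$, so it has $\ge 2^n+1$ leaves and depth $\ge \lceil \log_2(2^n+1)\rceil = n+1$.

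For the randomized upper bound $R^{cc}(\EQ_n)=O(\log n)$, I would use the standard random-prime fingerprint. Alice picks a uniformly random prime $p\in[n^2,2n^2]$ (by the prime number theorem there are $\Theta(n^2/\log n)$ such primes), and sends $(p,\ x\bmod p)$ in $O(\log n)$ bits; Bob accepts iff $y\equiv x\pmod p$. When $x\ne y$, the integer $|x-y|<2^n$ has at most $n$ distinct prime factors, so the probability of a bad prime is at most $O(n\log n/n^2)=o(1)$, comfortably below $1/3$.

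The main obstacle is the randomized lower bound $R^{cc}(\EQ_n)=\Omega(\log n)$ in the private-coin model. My plan is to use the rectangle property of private-coin protocols: for any transcript $\tau$ of a $c$-bit protocol $\Pi$, $\Pr[\Pi(x,y)=\tau]=a_\tau(x)\,b_\tau(y)$ for functions $a_\tau,b_\tau$ that depend only on Alice's and Bob's private randomness respectively, and there are at most $2^c$ transcripts. Combining the required acceptance bounds $p(x,x)\ge 2/3$ and $p(x,y)\le 1/3$ for $x\ne y$ with a distinguishability / counting argument across the $2^n$ diagonal inputs forces $2^c=\Omega(n)$, i.e.\ $c\ge\Omega(\log n)$. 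This is precisely the private-coin $\EQ$ lower bound in Kushilevitz--Nisan~\cite{KusNis97}. The delicate step, where I expect the most work, is extracting a clean quantitative bound from the bilinear form $p(x,y)=\sum_{\tau\text{ accepts}} a_\tau(x)b_\tau(y)$ across all $2^n$ diagonal pairs; note that Newman's theorem only bounds private coins in terms of public coins (the wrong direction for this proof) and cannot be used as a shortcut.
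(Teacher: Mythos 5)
The paper does not prove this lemma; it cites it directly to Kushilevitz--Nisan and Rao--Yehudayoff. Your four-part decomposition (trivial protocol, fooling set, prime fingerprinting, private-coin rectangle/counting bound) is exactly the standard treatment in those references, and each step is correct as sketched. One small quantitative note on the part you flagged as delicate: the counting argument you describe (discretize the per-transcript acceptance-probability vectors $(a_\tau(x))_\tau$ to precision $\Theta(2^{-c})$ and argue that distinct diagonal inputs must give distinct discretized vectors) actually yields $2^c \geq \Omega(n/\log n)$ rather than $\Omega(n)$, but this still gives $c \geq \Omega(\log n)$, which is all the lemma claims; your stated conclusion is fine even if the intermediate bound is a touch optimistic. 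You are also right that Newman's theorem points the wrong way and cannot shortcut the private-coin lower bound.
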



\section{Two-way finite automata and communication protocols}  \label{sec:Simulation}
We will build a connection between two-way finite automata and two-party communication protocols, via the correspondence between the language $L_f(n)$ as defined in \eqref{eq:L-f} and the two-party function $f$. Informally, we show that two-way finite automata and communication protocols can simulate each other in both directions, at least in some special cases.

\subsection{From two-way finite automata to communication protocols}   \label{sec:automaton-to-ccprotocol}

In this section we prove Theorem \ref{thm:lb}. The reason that we gain an extra factor $n$ lies in the fact that we interpolate $\#$ between $x$ and $y$ in defining $L_f(n)$ in \eqref{eq:L-f}.


\begin{lemma}  \label{lem:automaton-to-ccprotocol}
Let $\cM$ be a 2DFA that recognizes $L_f(n)$, then $D^{cc}(f) \le \frac{\ts(\cM)}{n} + 1$. 

Similarly, if $\cM$ is a 2PFA that recognizes $L_f(n)$ with an error $\epsilon$, then $R^{cc}_\epsilon(f) \le \frac{\ts(\cM)}{n} + 1$; if $\cM$ is a 2QCFA that recognizes $L_f(n)$ with an error $\epsilon$, then $Q^{cc}_\epsilon(f) \le \frac{\ts(\cM)}{n} + 1$.
\end{lemma}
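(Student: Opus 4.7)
The plan is to translate a two-way finite automaton $\cM$ recognizing $L_f(n)$ into a two-party protocol for $f$ by having Alice and Bob jointly simulate $\cM$ on input $\vertc x \#^n y \$$, exchanging the state of $\cM$ only when the head crosses a prescribed interface. The crucial observation is that although Alice only holds $x$ and Bob only holds $y$, both players know the entire $\#^n$ block that sits in the middle of the tape. Consequently Alice can simulate $\cM$ whenever the head is in positions $0$ through $2n$ (her symbols plus all of the $\#$'s), and Bob can simulate whenever the head is in positions $n+1$ through $3n+1$. The two ``overlap boundaries'' are the only places where a hand-off is required.

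First, I would formalize the protocol: Alice starts the simulation; she forwards the entire configuration of $\cM$ (classical state for 2DFA/2PFA, classical state together with the quantum register for 2QCFA) to Bob the first time the head moves from position $2n$ to $2n+1$; Bob then simulates and sends the configuration back to Alice the first time the head moves from position $n+1$ to $n$; and so on. When $\cM$ halts, whichever player is currently simulating sends the single accept/reject bit. Each message carries at most $\ssp(\cM)$ bits (or qubits, for 2QCFA), since $\ssp(\cM)$ by definition bounds the logarithm of the number of classical states plus the number of qubits.

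Next, I would bound the number of hand-offs. Between any two consecutive hand-offs the head must travel either from position $2n+1$ to position $n+1$ (then to $n$) or from position $n$ to position $2n$ (then to $2n+1$); in either case it must move through at least $n+1$ tape cells, hence take at least $n+1$ steps. Therefore the total number of hand-offs is at most $\ttm(\cM)/(n+1) \le \ttm(\cM)/n$, and the total communication is at most
\[
\ssp(\cM)\cdot\frac{\ttm(\cM)}{n} + 1 \;=\; \frac{\ts(\cM)}{n}+1.
\]
In the 2PFA case every step can be carried out with local random coins, so the protocol is a private-coin randomized protocol of the same cost and same error $\epsilon$, giving the bound on $R^{cc}_\epsilon(f)$; in the 2QCFA case the hand-off is implemented by quantum communication of the state register, giving the bound on $Q^{cc}_\epsilon(f)$.

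The main thing to be careful about is the hand-off counting argument, since it is precisely the ``overlap trick'' (both players know the $\#^n$ block) that saves the factor of $n$ and is the reason for padding with exactly $n$ copies of $\#$ in the definition of $L_f(n)$; with a shorter separator the head could ping-pong across the interface quickly and the bound would degrade. A minor technical point is that $\ttm(\cM)$ is defined only on accepted inputs, so one should spend a sentence noting that the simulation analysis applies uniformly to both accepting and rejecting computations (either by invoking a standard halting convention or by observing that the protocol's cost is measured in the natural way on each input).
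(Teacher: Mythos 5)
Your proof is correct and uses essentially the same argument as the paper: Alice and Bob jointly simulate $\cM$, hand off the $\ssp(\cM)$-bit (or qubit) configuration at the two boundaries of the shared $\#^n$ block, and observe that every hand-off forces the head to traverse the entire $\#$-block, so there are at most $\ttm(\cM)/n$ hand-offs. The only point where you are slightly more careful than the paper is in flagging that $\ttm(\cM)$ is defined via accepting runs only; the paper leaves this implicit, and your suggestion of adopting a uniform halting convention is a reasonable way to close that small gap.
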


\begin{proof}
Recall the definition of $\ssp(\cM)$ and $\ttm(\cM)$ in Definition \ref{def:TS}. 
Consider the following deterministic communication protocol $\pi$. Alice starts running $\cM$ with the partial input $x\#^n$. When the tape head moves out of the partial input from the right (i.e., it wishes to read the first symbol of $y$), Alice sends the current state to Bob, using $\ssp(\cM)$ bits. Note that it is unnecessary for Alice to send the tape head position to Bob since Bob knows the tape head must be at the first position of $y$. Now Bob starts running $\cM$ with the partial input $\#^ny$. Similarly, if the tape head moves out of his input from the left (i.e., it wishes to read the last symbol of $x$), Bob sends the current state of $\cM$ to Alice using $\ssp(\cM)$ bits. They repeat this process until the automaton reaches its final state. If the final state is an accepting state, the current player sends a single bit $1$ to the other player, indicating the simulation of $\cM$ is over and the result is accepting, otherwise the current player sends $0$ indicating the result is rejecting. Since $\cM$ recognizes $L_f(n)$, the protocol $\pi$ computes $f$ correctly. Furthermore, every time when Alice or Bob switches, she or he must have run $\cM$ for at least $n$ steps because  the $\#$ symbol must be read for at least $n$ times. Hence, the communication cost of $\pi$ is at most $\ssp(\cM) \times \frac{\ttm(\cM)}{n} + 1  =\frac{\ts(\cM)}{n} + 1$.  

The proof is the same for 2PFA. For 2QCFA, the proof is also similar: they send both the current classical state and quantum state using qubits, the length of which by definition is at most the space complexity of the 2QCFA. 
\end{proof}

Lemma \ref{lem:automaton-to-ccprotocol} immediately implies Theorem \ref{thm:lb} and the quantum counterparts, summarized below. 

\begin{corollary} \label{cor:automaton-to-ccprotocol}
The following hold,
\begin{itemize}
\item[(1)] $\ts_{2DFA}(L_f(n)) \ge n(D^{cc}(f) - 1)$;
\item[(2)] $\ts_{2PFA}(L_f(n)) \ge n(R^{cc}(f) - 1)$;
\item[(3)] $\ts_{2QCFA}(L_f(n)) \ge n(Q^{cc}(f) - 1)$;
\item[(4)] $\ts_{2QCFA,exact}(L_f(n)) \ge n(Q^{cc}_E(f) - 1)$.
\end{itemize}
\end{corollary}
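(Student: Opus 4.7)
The plan is to derive each inequality of the corollary as a clean repackaging of the per-automaton bounds already supplied by Lemma \ref{lem:automaton-to-ccprotocol}, followed by taking the infimum over all automata recognizing $L_f(n)$ as prescribed by Definition \ref{def:TS}.

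For part (1), I would start from the 2DFA conclusion of Lemma \ref{lem:automaton-to-ccprotocol}: for every 2DFA $\cM$ with $\lang(\cM) = L_f(n)$ we have
$$D^{cc}(f) \le \frac{\ts(\cM)}{n} + 1,$$
which rearranges to $\ts(\cM) \ge n(D^{cc}(f) - 1)$. Since the right-hand side is independent of $\cM$, taking the infimum over all such $\cM$ and using the definition $\ts_{2DFA}(L_f(n)) = \inf_{\cM}\ts(\cM)$ yields the claim. Parts (2) and (3) proceed in exactly the same way: specialize the 2PFA and 2QCFA conclusions of the lemma to the bounded-error regime $\epsilon = 1/3$, rearrange, and take the infimum over all 2PFAs (respectively 2QCFAs) recognizing $L_f(n)$ with bounded error.

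For part (4), I would specialize the 2QCFA case of the lemma to the exact regime $\epsilon = 0$: for every exact 2QCFA $\cM$ recognizing $L_f(n)$ one obtains $Q^{cc}_E(f) \le \ts(\cM)/n + 1$, and the infimum over such $\cM$ gives $\ts_{2QCFA,exact}(L_f(n)) \ge n(Q^{cc}_E(f) - 1)$.

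I do not expect any substantive obstacle, since the hard work—the protocol-simulation argument in which the $\#^n$ buffer of $L_f(n)$ forces at least $n$ automaton steps between any two switches of the ``active'' player, thereby amortizing the per-switch state-transfer cost of $\ssp(\cM)$ bits into the factor of $n$—has already been carried out in the proof of Lemma \ref{lem:automaton-to-ccprotocol}. The only mild care is in passing from a uniform per-automaton bound to a bound on the infimum-defined quantities $\ts_{2DFA}, \ts_{2PFA}, \ts_{2QCFA}, \ts_{2QCFA,exact}$, which is immediate because the lower bound holds for \emph{every} admissible $\cM$ and does not depend on whether the infimum is attained.
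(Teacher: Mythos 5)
Your proposal is correct and matches the paper's approach exactly: the paper simply notes that Lemma \ref{lem:automaton-to-ccprotocol} immediately implies the corollary, and your rearrangement of the per-automaton bound $D^{cc}(f) \le \ts(\cM)/n + 1$ followed by taking the infimum over all admissible $\cM$ (with the appropriate error regime $\epsilon = 1/3$ or $\epsilon = 0$ for each part) is precisely the intended, if unstated, argument.
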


We now prove Corollary \ref{cor:lb}. For this purpose we briefly discuss lifting theorems in communication complexity. Let 
    $f = h\circ g$
be a composed function of the form \eqref{eq:def-composition}. Suppose the input for $f$ is 
    $(x,y) \in \{0,1\}^n \times \{0,1\}^n$
where 
    $n=pm$.
Write
    $x=(x_1, \ldots, x_p)$ and
    $y=(y_1, \ldots, y_p)$  
where
    $(x_i,y_i) \in \{0,1\}^m \times \{0,1\}^m$.
Recall
    $f(x,y) = h(z)$
where
    $z_i = g(x_i,y_i)$
for $1 \le i \le p$. Given a deterministic query algorithm $\cA$ computing $h$ and a deterministic communication protocol $\tau$ computing $g$, they naturally induce a communication protocol $\pi$ computing $f$, as follows: Alice and Bob together simulate $\cA$,  when $\cA$ queries $z_i$, Alice and Bob use the communication protocol $\tau$ to compute $z_i = g(x_i,y_i)$ (this is possible because Alice knows $x$ hence knows $x_i$ and Bob knows $y$ hence knows $y_i$, for every $1\le i\le p$), and then they jointly go to the next query of $\cA$, and finally, the output of $\cA$ will be the output for $\pi$. It is not hard to see that protocol $\pi$ indeed computes $f = h \circ g$ correctly, and the communication cost of $\pi$ is at most the number of queries in $\cA$ times the communication cost of $\tau$. Hence, one has
    \[
        D^{cc}(f) \le D(h) \cdot D^{cc}(g).
    \]
A  lifting theorem for deterministic communication complexity  says the above protocol $\pi$ is basically the optimal communication protocol for $f = h \circ g$. In other words, it asserts that
    \[
        D^{cc}(f) \ge \Omega(D(h) \cdot D^{cc}(g)).
    \]
Similarly, a lifting theorem for randomized communication complexity asserts that 
    $R^{cc}(f) \ge \widetilde{\Omega}(R(h) \cdot R^{cc}(g))$.

We are now ready to prove Corollary \ref{cor:lb}, restated below.

\corlb*

\begin{proof}
Let 
    $f= h \circ \IP_m$
where $m=\log p$. 
It is well-known (see, e.g., \cite{KusNis97}) that 
    $D^{cc}(\IP_m) = \Theta(m)$ and 
    $R^{cc}(\IP_m) = \Theta(m)$.
\cite{chattopadhyay2019simulation,liftingdet} showed the following  lifting theorem for deterministic communication complexity  
    $D^{cc}(f) = D^{cc}(h \circ \IP_m) = \Omega(D(h) \log p)$.
This together with Corollary \ref{cor:automaton-to-ccprotocol}  implies
    \[
        \ts_{2DFA}(L_f(n)) \ge n(D^{cc}(f) - 1)
        \ge n \cdot \Omega(D(h) \log p) = \widetilde{\Omega}(n D(h)),
    \]
where we used the fact that 
    $n = pm = p\log p$,  
hence
    $\log p = \widetilde{\Theta}(\log n)$.

Chattopadhyay {\it et al.~}\cite{chattopadhyay2019query} showed the following  lifting theorem  for randomized communication complexity 
    $R^{cc}(f) = R^{cc}(h \circ \IP_m) = \Omega(R(h) \log p)$. 
This together with Corollary \ref{cor:automaton-to-ccprotocol} proves \eqref{eq:random-lifting}, similarly as above.
\end{proof}

\subsection{From communication protocols to two-way finite automata}  \label{sec:ccprotocol-to-automaton}
Consider the opposite direction of Lemma \ref{lem:automaton-to-ccprotocol}: does a (classical or quantum) communication protocol for $f$ give rise to a two-way automaton for $L_f(n)$? We show that this can be done in some special cases.

\subsubsection{Simulating query algorithms} \label{sec:simulate-quantum}
In this section we prove  Theorem \ref{thm:ub}. 
For notational clarity of the proof, we consider the simpler case $m=1$ first. Recall that 
given a function $h: \{0,1\}^n \to \{0,1\}$ and a two-party function $g: \{0,1\} \times \{0,1\} \to \{0,1\}$, they define $f=h \circ g$ as follows,
    \begin{equation}  \label{eq:composition}
        f(x,y) = h(g(x_1,y_1), g(x_2,y_2), \ldots, g(x_n,y_n)).
    \end{equation}
Buhrman et al. \cite{Buh98} showed that $Q^{cc}(f) \le O(Q(h) \log n) = \widetilde{O}(Q(h))$. We focus on this special type of communication protocols for $f$, i.e.,  protocols that are obtained via query algorithms for $h$. Below we show that given a quantum query algorithm for $h$, one can construct a 2QCFA recognizing $L_f(n)$ with the same error. The simulation that we construct is inspired by \cite{Buh98}. 

We restate Theorem \ref{thm:ub} in the case $m=1$ in a more precise form in the following.

\begin{thm}   \label{thm:m-is-1}
    Let $f = h \circ g$ be of the form \eqref{eq:composition}. Suppose that there is a quantum query algorithm $\cA$ solving $h$ with error $\epsilon$, furthermore, suppose that $\cA$ uses $k$ quantum basis states and  $t$ queries. Then, there exists a 2QCFA $\cM$ that recognizes $L_f(n)$ with error $\epsilon$. Furthermore, $\cM$ uses $2k$ quantum basis states, $O(t\cdot n)$ classical states, and $\ttm(\cM)=O(t\cdot n)$. As $t = O(n)$,  
    \[
    \ts(\cM) = O\Big(tn \cdot (\log(tn) + \log 2k)\Big) = O\Big(tn \cdot (\log n + \log k)\Big) .
    \]
\end{thm}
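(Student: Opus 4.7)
The goal is to have a 2QCFA $\cM$ mechanically simulate the query algorithm $\cA$ on input $\vertc x \#^n y \$$. I would let $\cM$'s quantum register consist of $\cA$'s $k$-dimensional register (holding the index $i$, the query bit $b$, and any workspace) together with one ancilla qubit $a$, giving $2k$ basis states in total. The classical states implement a program counter that steps through the macro-program ``apply $U_0$; simulate $\cO_z$; apply $U_1$; $\ldots$; apply $U_t$; measure''. Each $U_j$ is a fixed unitary independent of the tape, so $\cM$ applies it as a quantum operation without moving the head. The entire challenge is to simulate one oracle call $\cO_z: |i,b\rangle \mapsto |i, b \oplus z_i\rangle$, with $z_i = g(x_i, y_i)$, using only $O(n)$ head moves.

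My first step would be to write $g$ in algebraic normal form $g(u,v) = c_0 \oplus c_1 u \oplus c_2 v \oplus c_3 uv$ with $c_0,c_1,c_2,c_3 \in \{0,1\}$, and to factor
\[
    \cO_z \;=\; X_b^{c_0}\,\cdot\,\cO_x^{c_1}\,\cdot\,\cO_y^{c_2}\,\cdot\,\cO_{xy}^{c_3},
\]
where $\cO_x: |i,b\rangle \mapsto |i, b\oplus x_i\rangle$, $\cO_y: |i,b\rangle \mapsto |i, b\oplus y_i\rangle$, and $\cO_{xy}: |i,b\rangle \mapsto |i, b\oplus x_i y_i\rangle$; the factors commute because they each act only as $X$'s on the $b$ register. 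To implement $\cO_x$, $\cM$ sweeps the head from position $1$ to $n$ over the $x$-part: at each tape position $j$ it classically reads $x_j$, and if $x_j = 1$ applies the unitary $V_j = \sum_i |i\rangle\langle i| \otimes X_b^{[i=j]}$ on the quantum register. Crucially, $j$ is a classical value (the current head position), so $V_j$ is a bona fide unitary; composing the $V_j$'s over the sweep realizes $\cO_x$ exactly. The simulation of $\cO_y$ is symmetric over the $y$-part.

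For $\cO_{xy}$ I would use the ancilla $a$ (initially $|0\rangle$) in a compute-uncompute pattern: one sweep over $x$ to copy $x_i$ into $a$ conditioned on $|i\rangle$; one sweep over $y$ to flip $b$ conditioned on both $a = 1$ and $|i = j\rangle$; one final sweep over $x$ to uncompute $a$. After these three $O(n)$-length sweeps, $a$ has returned to $|0\rangle$ and $b$ has been XORed with $x_i y_i$, exactly realizing $\cO_{xy}$. The ancilla is the sole reason $\cM$ needs $2k$ rather than $k$ quantum basis states. Each oracle call therefore uses $O(n)$ head moves, so $\ttm(\cM) = O(tn)$; the classical program counter plus a bounded number of sub-phase labels per query step fits easily into $O(tn)$ classical states. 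The main technical point --- and the place I would be most careful --- is verifying that each $V_j$ really is a unitary on the quantum register (which relies on keeping $i$ quantum while the sweep index $j$ stays classical) and that the ancilla is uncomputed exactly, so that $\cM$ reproduces $\cA$'s output distribution and therefore recognizes $L_f(n)$ with the same error $\epsilon$.
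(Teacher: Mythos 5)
Your proposal is correct and matches the paper's construction in all essentials: both simulate $\cA$ with a 2QCFA whose classical states act as a program counter plus head-position tracker, and both implement each oracle call $\cO_z$ via $O(n)$ tape sweeps using a single ancilla qubit (hence the $2k$ basis states) in a compute--uncompute pattern. The only cosmetic difference is that the paper realizes $g$ in a single unified pass (sweep $x$ to write $x_i$ into the ancilla, sweep $y$ applying $|i,b,c\rangle \mapsto |i,\,g(c,y_j)\oplus b,\,c\rangle$ at position $j$, sweep $x$ again to uncompute), whereas you first expand $g$ into its algebraic normal form and implement the degree-$\le 1$ terms ancilla-free and only the $uv$ term with compute--uncompute; this is a slightly more roundabout route with the same $O(n)$ head moves per query and the same state and time bounds.
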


\begin{proof}
Given $(x,y) \in \{0,1\}^n \times \{0,1\}^n$, let $z=(z_1, \ldots, z_n) \in \{0,1\}^n$ where 
\begin{equation}   \label{eq:z_i}
z_i = g(x_i, y_i).
\end{equation}
Then $f(x,y) = h(z)$. 

Consider the quantum query algorithm $\cA$ for $h$, the input to $\cA$ is $z$. Suppose $\cA$ is of the following form with an initial quantum state $\ket{\phi_0}$,
\begin{equation} \label{eq:query-alg-A}
\begin{tikzcd}   
\ket{\phi_0} \arrow[r, "U_0"]
& 
\ket{\phi_1} \arrow[r, "\cO_z"]
&
\ket{\phi'_1} \arrow[r, "U_1"]
&
\ket{\phi_2} \arrow[r, "\cO_z"]
&
\ket{\phi'_2} \arrow[r, "U_2"]
& 
 \cdots
& 
\ket{\phi_t} \arrow[r, "\cO_z"]
&
\ket{\phi'_t} \arrow[r, "U_t"]
&
\ket{\phi_{t+1}}.
\end{tikzcd}
\end{equation}
Consider the first segment of $\cA$, 
\begin{equation} \label{eq:query-alg-A-1}
\begin{tikzcd}   
\ket{\phi_0} \arrow[r, "U_0"]
& 
\ket{\phi_1} \arrow[r, "\cO_z"]
&
\ket{\phi'_1}.
\end{tikzcd}
\end{equation}
Observe that the structure of $\cA$ is simply a repetition of the above segment $t$ times, each time with possibly a different unitary transformation. Finally, $\cA$ makes one more unitary transformation $U_t$ and then makes a measurement on $\phi_{t+1}$ and outputs accordingly.  

We now construct a 2QCFA $\cM$ that recognizes $L_f(n)$ (with the same error as $\cA$ for $h$). Recall that $w\in L_f(n)$ is of the form $w=x\#^ny$ where $(x,y) \in \{0,1\}^n \times \{0,1\}^n$, the input to $\cM$ is $\vertc w\$$. Assume for simplicity that the tape is circular, i.e., after reading $\$$, if the head moves to the right then it reaches $\vertc$. 

Firstly, it is easy (using classical states only) to check that $w$ is of the form $x\#^n y$, otherwise reject the input. This step uses at most $O(n)$ classical states and $O(n)$ time. Now assume the input is in this form. 

We focus on simulating the first segment \eqref{eq:query-alg-A-1}, the rest can be done similarly. Note that the unitary operation $U_0$ does not depend on the input $z$, hence it can be applied in the 2QCFA directly, as follows.

\begin{equation} \label{eq:2QCFA-M-U0}
\begin{tikzcd}   
\ket{\phi_0,0} \arrow{rr}{\Theta(s_{1,0},\vertc)}
& &
\ket{\phi_1,0}
\\
s_{1,0} \arrow{rr}{\vertc} 
&&
s_{1,1}
\end{tikzcd}
\end{equation} 
where the transition functions are defined by
\begin{equation}   \label{eq:def-step-U0}
\begin{cases}
\Theta(s_{1,0},\vertc) = U_0 \otimes I, \\
\delta(s_{1,0}, \vertc) = (s_{1,1}, 1).
\end{cases}
\end{equation}
That is, the quantum step applies $U_0$ to $\ket{\phi_0}$ and keeps the last qubit unchanged. The classical step changes the state to $s_{1,1}$ and  the tape head moves to the right.

Next, $\cM$ will read the input twice to simulate the query $\cO_z: \ket{i, b_i} \mapsto \ket{i, z_i \oplus b_i}$. It is for this purpose that we use an extra working qubit to help us. So, the query that we will simulate is $\cO_z \otimes I: \ket{i, b_i,0} \mapsto \ket{i, z_i \oplus b_i, 0}$. Since $\cO_z \ket{\phi_1} =\ket{\phi'_1}$, we have $(\cO_z \otimes I) \ket{\phi_1, 0} = \ket{\phi'_1, 0}$ as desired. We use five steps to describe this simulation. In all these steps, the tape head keeps moving to the right.
\begin{itemize}
\item[(1)] $\cM$ reads $x$ the first time. The transition of quantum and classical states are as follows,
\begin{equation} \label{eq:2QCFA-M-step1}
\begin{tikzcd}   
\ket{\phi_1,0} \arrow{rr}{\Theta(s_{1,1},x_1)}
&&
\ket{\varphi_2} \arrow{rr}{\Theta(s_{1,2},x_2)}
&&
\cdots
\ket{\varphi_{n}} \arrow{rr}{\Theta(s_{1,n},x_n)}
&&
\ket{\varphi_{n+1}} 
\\
s_{1,1} \arrow{rr}{x_1}
&&
s_{1,2} \arrow{rr}{x_2}
&&
\cdots
s_{1,n} \arrow{rr}{x_n}
&&
s_{1,n+1}
\end{tikzcd}
\end{equation}
where for $j= 1, \ldots, n$, 
\begin{equation}   \label{eq:def-step-1}
\Theta(s_{1,j},x_j) \ket{i,b_i,c}
=
\begin{cases}
\ket{i,b_i,x_i\oplus c}, &\quad i=j, \\
\ket{i,b_i,c}, &\quad i\neq j.
\end{cases}
\end{equation}
Note that after this step, the state $\ket{i,b_i,0}$ in $\ket{\phi_1, 0}$ has changed to $\ket{i,b_i,x_i}$ in $\ket{\varphi_{n+1}}$ for every $i=1,2,\ldots, n$. In other words, after reading $x$ the automaton $\cM$ simulates $\cO_x$ acting on the last qubit. 

\item[(2)] When $\cM$ reads $\#$, it keeps its quantum state unchanged, and classical state changes from $s_{1,n+1}$ to $s_{1,n+2}$, $s_{1,n+3}$ etc until $s_{1,2n+1}$ after $\cM$ reads the last $\#$.  For notational uniformity, we denote the quantum state after reading the last $\#$ by $\ket{\varphi_{2n+1}} = \ket{\varphi_{n+1}}$.

\item[(3)] $\cM$ reads $y$ the first time. Similar to reading $x$, this time $\cM$ simulates $\cO_y$ but acting on the penultimate qubit, and also translated by $x$ using the function $g$. Specifically, 
\begin{equation} \label{eq:2QCFA-M-step1}
\begin{tikzcd}   
\ket{\varphi_{2n+1}} \arrow{rr}{\Theta(s_{1,2n+1},y_1)}
&&
\ket{\varphi_{2n+2}} \arrow{rr}{\Theta(s_{1,2n+2},y_2)}
&&
\cdots
\ket{\varphi_{3n}} \arrow{rr}{\Theta(s_{1,3n},y_n)}
&&
\ket{\varphi_{3n+1}} 
\\
s_{1,2n+1} \arrow{rr}{y_1}
&&
s_{1,2n+2} \arrow{rr}{y_2}
&&
\cdots
s_{1,3n} \arrow{rr}{y_n}
&&
s_{1,3n+1}
\end{tikzcd}
\end{equation}
where for $j = 1, \ldots, n$,
\begin{equation}   \label{eq:def-step-1}
\Theta(s_{1,2n+j},y_j) \ket{i,b_i,c}
=
\begin{cases}
\ket{i,g(c,y_i)\oplus b_i,c}, &\quad i=j, \\
\ket{i,b_i,c}, &\quad i\neq j.
\end{cases}
\end{equation}
Since in $\ket{\varphi_{2n+1}} = \ket{\varphi_{n+1}}$, the basis state we are interested in is in state $\ket{i,b_i,x_i}$ for every $i=1, \ldots, n$. Hence, after Step 3, in $\ket{\varphi_{3n+1}}$ this state has changed to $\ket{i,g(x_i, y_i)\oplus b_i,x_i} = \ket{i,z_i\oplus b_i,x_i}$. 

\item[(4)] The tape moves to the right and goes back to $\#$, then it keeps moving to the right and reads $x$ the second time. This step is similar to Step (1): the automaton $\cM$ simulates $\cO_x$ one more time, acting on the last qubit. As a result, the state we are interested in  changes from $\ket{i,z_i\oplus b_i,x_i}$ to $\ket{i,z_i\oplus b_i,x_i\oplus x_i} = \ket{i,z_i\oplus b_i,0}$, as desired. At the end of this step, we have simulated $\cO_z \otimes I$. 

\item[(5)] The $\cM$ keeps moving to the right, with quantum state unchanged, until it returns to $\vertc$, where it changes its classical state to $s_{2,0}$. Note that the final quantum state is in $\ket{\phi'_1,0}$ as desired. 
\end{itemize}

We have shown that the first segment (i.e., \eqref{eq:query-alg-A-1}) of query algorithm $\cA$  can be simulated with number of classical states $O(n)$ and time $O(n)$. Obviously, the automaton $\cM$ could repeat this simulation $t$ times, mimicing the query algorithm $\cA$. In the end $\cM$ makes the same measurement as $\cA$, and accepts iff $\cA$ outputs $1$ according to the measurement result. Since $\cM$ simulates the query algorithm $\cA$ with input $z=(g(x_1, y_1), \ldots, (x_n,y_n))$, it is obvious that $\cM$ recognizes $L_f(n)$ with the same error as $\cA$ does. In total, the classical states used by $\cM$ is $O(t\cdot n)$ and $\ttm(\cM) = O(t\cdot n$).  Since $\cM$ uses one extra qubit, the number of its quantum basis states is $2k$.  
\end{proof}

It is not hard to see that a similar proof for Theorem \ref{thm:m-is-1} would prove Theorem \ref{thm:ub}. We omit the full detail but point out one difference,  that when $g$ has $m$ input bits, the 2QCFA $\cM$ would use $2^m k$ quantum basis states, hence the additive factor $\log k$ in Theorem \ref{thm:m-is-1}  becomes $\log (2^m k) = m+\log k = m + O(\log n)$.

\subsubsection{Simulating communication protocols directly} \label{sec:simulate-cc-EQ}

To prove Theorem \ref{thm:sep-L-EQ}, we establish the following lemma first.  The proof can be viewed as an example that one can simulate some communication protocols \emph{directly} by two-way finite automata.

\begin{lemma}  \label{lem:ccprotocol-to-automaton-EQ}
$\ts_{2PFA}(L_{\EQ}(n)) = O(n \log n)$. 
\end{lemma}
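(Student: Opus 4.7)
The plan is to implement the standard $O(\log n)$-bit randomized fingerprinting protocol for equality (which underlies the upper bound in Lemma~\ref{lem:cc-results}) as a 2PFA $\cM$ recognizing $L_{\EQ}(n)$. Concretely, I would fix a set $\mathcal{P}$ of $2^k$ distinct primes below $n^c$ for some constant $c \ge 2$ and $k = \Theta(\log n)$; by the prime number theorem such a $\mathcal{P}$ exists for $n$ large. The automaton first uses a deterministic sweep to verify that the input has the correct shape $x \#^n y$ with $|x|=|y|=n$, then samples a uniformly random prime $p \in \mathcal{P}$, then makes one left-to-right scan of $x$ maintaining $r_x := x \bmod p$ via the update $r \mapsto (2r + x_i) \bmod p$, then sweeps past the $\#^n$ block, then makes an analogous scan of $y$ to produce $r_y := y \bmod p$, and finally accepts iff $r_x = r_y$.

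All five phases use only $O(\log n)$ bits of state: the format check maintains a counter in $\{0,1,\ldots,2n+1\}$; the residue-computation phases store the pair $(p,r)$ with $p,r < n^c$; and between phases $\cM$ only needs a constant-size phase label. Thus $\ssp(\cM) = O(\log n)$, i.e.\ $\mathrm{poly}(n)$ classical states overall, and the total head motion is $O(n)$ symbols, so $\ttm(\cM) = O(n)$. Correctness is the usual fingerprinting analysis: if $x=y$ then $r_x=r_y$ with probability $1$, while if $x\neq y$ then $x-y$ is a nonzero integer of absolute value at most $2^n$ with at most $n$ distinct prime divisors, so the probability that the random $p \in \mathcal{P}$ divides $x-y$ is at most $n/2^k = O(n^{1-c})$, which is far below $1/3$ for $c \ge 2$. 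Combining gives $\ts(\cM) = \ttm(\cM)\cdot \ssp(\cM) = O(n \log n)$.

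The one implementation step that needs care is the sampling of $p$: a 2PFA's only source of randomness is the stochastic branching in its transition function, so I need to spell out how to assemble $k$ fair independent bits into the index of an element of $\mathcal{P}$. This can be done by a small gadget with $O(k) = O(\log n)$ dedicated states in which $\cM$ walks rightward across the first $k$ tape cells, at each step branching with probability $1/2$ between two successors that differ only in the recorded bit of the sampled index; the final index is then translated into a concrete $p \in \mathcal{P}$ via a hard-coded decoding table baked into the transition function (permissible because $|\mathcal{P}|=\mathrm{poly}(n)$ already fits in our state budget). After this gadget, the rest of the construction is routine arithmetic-mod-$p$ bookkeeping, and the claimed upper bound $\ts_{2PFA}(L_{\EQ}(n)) = O(n\log n)$ follows.
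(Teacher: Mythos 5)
Your proposal is correct and follows essentially the same approach as the paper: simulate the standard $O(\log n)$-bit randomized fingerprinting protocol for $\EQ_n$ inside a 2PFA by sampling a random $\mathrm{poly}(n)$-sized prime $p$ and comparing $x \bmod p$ with $y \bmod p$ during an $O(n)$-time sweep, giving $\mathrm{poly}(n)$ classical states and hence time-space $O(n\log n)$. If anything, your writeup is a bit more careful about the order of operations (sample $p$ \emph{before} scanning $x$ and update the residue incrementally, which is what the paper's $O(n^6)$ state count actually requires) and about how a 2PFA assembles $\Theta(\log n)$ random bits; the error analyses---at most $n$ distinct prime divisors of $|x-y|<2^n$ against $\Theta(n^2/\log n)$ candidate primes---agree.
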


\begin{proof}
    We will turn a randomized communication protocol to a 2PFA. Let us briefly recall the standard randomized communication protocol $\pi$ for solving $\EQ_n$ with a bounded error, for detail see \cite{KusNis97}. Alice samples a random prime number $p \le n^2$. Viewing $x \in \{0,1\}^n$ as an integer $x \in \{0,1,\ldots, 2^n-1\}$, she  sends both $p$ and $x \mod p$ to Bob, using $O(\log n)$ bits. Bob sends $1$ back to Alice if $y \equiv x \mod p$, otherwise he sends $0$. The bit that Bob sends is the output of the protocol.
    
    We now turn this protocol to  a 2PFA $\cM$ recognizing $L_{\EQ}(n)$ with a bounded error, as follows. Let $w$ be an input string and $\vertc w\$$ is the input to $\cM$. $\cM$ firstly checks whether $w$ is of the correct form $x\#^ny$, using time $O(n)$ and number of states $O(n)$.  Then $\cM$ returns to the $\vertc$ symbol and start simulating $\pi$. $\cM$ firstly reads $x$ deterministically, then on reaching the first $\#$ symbol it changes its state randomly to a state $s_{p,a,0}$ where $p\le n^2$ is a random prime number and $x \equiv a \mod p$. It moves its head to the right and keeps reading $\#$ and $y$ while keeping at the state $s_{p,a,0}$. Until when it reaches $\$$, it changes its state to $s_{p,a,b}$ where $y\equiv b \mod p$. The states $s_{p,a,b}$ are accepting states if $a=b$ and rejecting states otherwise. Obviously, the time it takes to simulate $\pi$ is $O(n)$ and the number of states is $O(n^6)$. Hence,
    \[
    \ttm(\cM) = O(n) + O(n) = O(n), \quad
    \ssp(\cM) = O(n) + O(n^6) = O(n^6). 
    \]
    Therefore, $\ts(\cM) = O(n \log n)$. 
    
    The fact that $\cM$ recognizes $L_{\EQ}(n)$ with a bounded error follows from the same analysis for showing the communication protocol $\pi$ has a bounded error. For completeness, we record the analysis here. It is obvious that $\cM$ always accepts $w\in L_{\EQ}(n)$. From now on, we assume $w\not\in L_{\EQ}(n)$ and $\cM$ accepts $w$, or equivalently, $x\neq y$ but there exists some prime numbers $p\le n^2$ such that $x\equiv y \mod p$. Consider the set 
    \[
    Bad(x,y) = \{2\le p \le n^2, p \text{\ is prime}\ |\ x\equiv y \mod p\}. 
    \]
    Then, 
    \begin{equation}  \label{eq:error}
    \Pr[\cM \text{\ accepts\ } x\#^ny] = \frac{|Bad(x,y)|}{\text{number of primes}\le n^2}.
    \end{equation}
    Let $k = |Bad(x,y)|$. Suppose that the prime numbers in the set $Bad(x,y)$ are $p_1, p_2, \ldots, p_k$, then $x \equiv y \mod p_i$ for every $1 \le i\le k$. Hence, 
    \[
    x \equiv y \mod p_1 p_2 \cdots p_k.
    \]
    However, $x\neq y$, hence, $0< |x - y| \le 2^n - 1$. As a result, $2^k\le p_1 p_2 \cdots p_k \le |x-y| \le 2^n - 1$, implying $k \le n-1$. On the other hand, the prime number theorem says that the number of primes less than $n^2$ approximates $\frac{n^2}{\log n^2} = \frac{n^2}{2 \log n} \gg n-1 \ge k$ when $n$ is sufficiently large. This implies $\lim_{n\to \infty} \Pr[\cM \text{\ accepts\ } w] = 0$. In fact, since the number of primes less than $10^2$ is $25$, one has for $n=10$, $\Pr[\cM \text{\ accepts\ } x\#^ny] \le \frac{10-1}{25} =0.36$. For every $n< 10$, obviously one can construct a constant time-space 2PFA (in fact even 2DFA) to recognize $L_{\EQ}(n)$ with no error. Hence, the lemma holds for every $n$. 
\end{proof}

Now we are ready to prove Theorem \ref{thm:sep-L-EQ}, restated below. 

\thmLEQ*

\begin{proof}
    For 2DFA, the lower bound follows from Lemma \ref{lem:cc-results} and Corollary \ref{cor:automaton-to-ccprotocol}, $\ts_{2DFA}(L_{\EQ}(n)) \ge n(D^{cc}(\EQ_n) - 1) = n^2$. For the upper bound, consider the 2DFA $\cM$ that firstly memorizes $x$, then it checks the number of $\#$ is $n$, and then compares $y$ to $x$. Hence, $\lang(\cM) = L_{\EQ}(n)$. It is easy to see that  $\cM$ can be constructed in the form of a tree structure, resulting $\ttm(\cM) = 3n$, and number of states $O(2^{3n})$. Hence $\ts(\cM) = 3n \times \log O(2^{3n}) = O(n^2)$.  
    
    For 2PFA, similarly the lower bound is  $\ts_{2PFA}(L_{\EQ}(n)) \ge n(R^{cc}(\EQ_n) - 1) \ge  \Omega(n \log n)$. The upper bound comes from Lemma \ref{lem:ccprotocol-to-automaton-EQ}.
\end{proof}

\begin{remark}
(1)  Although we demonstrated with only one example (i.e., the equality function), it might be possible that some other classical or quantum communication protocols could also be simulated directly by two-way finite automata of corresponding types. 
(2) Independently from simulating communication protocols as we show here in Lemma \ref{lem:ccprotocol-to-automaton-EQ}, the idea of reducing a general computation (e.g., whether $x=y$) to a modulo computation (e.g., whether $x \equiv y \mod p$) was also \emph{directly} used in studying 2PFA. The proof for that the non-regular language $\{a^n b^n \ | \ n\ge 1\}$ can be recognized by 2PFA is such an example, as shown in \cite{2PFAub}. 
\end{remark}

\section{Separation results for time-space complexity and examples}   \label{sec:sep-results-proof}

We are now ready to prove Theorem \ref{thm:separation-via-query}, restated below. 

\thmSepViaQuery*

\begin{proof}
    Consider (1) for example. Ambainis {\it et al.~}\cite{ambainis2017separations} showed that there exists a function $h: \{0,1\}^p \to \{0,1\}$ such that,  
        $D(h) \ge \widetilde{\Omega}(p)$ and 
        $Q(h) \le \widetilde{O}(p^{1/4})$.
    Let 
        $f=h\circ \IP_m$
    where
        $m=\Theta(\log p)$.
    Let
        $n=pm$
    be the input size for $f$. 
    
    By Corollary \ref{cor:lb}, 
        \[
            \ts_{2DFA}(L_f(n)) \ge  \widetilde{\Omega}(n D(h)) \ge  \widetilde{\Omega}(n p) = \widetilde{\Omega}(n^2),
        \]
    where we used $p = \widetilde{\Theta}(n)$ because $m=\Theta(\log p)$. 
    Furthermore, for every language $L$ that consists of length-$k$ strings on an alphabet of constant size, it is easy to see that
        $\ts_{2DFA}(L) \le O(k^2)$.
    Hence,  
        $\ts_{2DFA}(L_f(n)) \le O(n^2)$.
    Therefore,
        $\ts_{2DFA}(L_f(n)) = \widetilde{\Theta}(n^2)$. 
    
    By Theorem \ref{thm:ub},
        \[
            \ts_{2QCFA}(L_f(n)) \le  \widetilde{O}(n Q(h)) \le  \widetilde{O}(n p^{1/4}) = \widetilde{O}(n^{5/4}).
        \]
        
    The proofs for (2), (3) and (4) can be similarly established by applying the following separation results for $h$.     
        
    Ambainis {\it et al.~}\cite{ambainis2017separations} showed that, for (2) there exists $h$ such that $D(h) \ge \widetilde{\Omega}(p)$ and $Q_E(h) \le \widetilde{O}(p^{1/2})$; and for (4) there exists $h$ such that $R(h) \ge \widetilde{\Omega}(p)$ and $Q_E(h) \le O(p^{2/3})$.

    For (3),  Sherstov {\it et al.~}\cite{sherstov2020optimal} introduced a function $h: \{0,1\}^p \to \{0,1\}$ such that  $R(h) \ge \widetilde{\Omega}(p)$ and $Q(h) \le \widetilde{O}(p^{1/3})$. 
\end{proof}

Unfortunately, the definitions of the functions $h$ used in Theorem \ref{thm:separation-via-query} are all too complicated to be defined here. We refer the interested reader to the original papers \cite{ambainis2017separations,sherstov2020optimal}. Below we give two examples where the functions involved are simple, though the separations obtained from them are weaker than Theorem \ref{thm:separation-via-query}.

To define a composed function of the form \eqref{eq:def-composition}, we need to specify both $h$ and $g$. In both examples the two-party function $g$ is simply the $\wedge$ function, 
    $g=\wedge: \{0,1\} \times \{0,1\} \to \{0,1\}$, 
i.e., $g(a,b) = 1$ iff $a=b=1$. 
We proceed to define the two $h$s. 
The first function is the 
    $\OR_n: \{0,1\}^n \to \{0,1\}$, 
i.e., $\OR_n(x) = 0$ iff $x_i=0$ for every $1\le i \le n$.   
To define the second function 
    $\NE_n$ for $n=3^d$ 
where $d\in \N$, 
we start by defining 
    $\NE^1: \{0,1\}^3 \to \{0,1\}$ 
as follows:  
    $\NE^1(x_1, x_2, x_3) = 0$ if $x_1 = x_2 = x_3$, 
otherwise 
    $\NE^1(x_1, x_2, x_3) = 1$. 
The function    
    $\NE_n: \{0,1\}^n \to \{0,1\}$ 
is defined by composing $\NE^1$ with itself $d$ times, i.e.,
    \[
        \NE_n(x) = \NE^d(x) 
        = \NE^1(\NE^{d-1}(x_1, \ldots, x_{3^{d-1}}), \NE^{d-1}(x_{3^{d-1}+1}, \ldots, x_{2 \cdot 3^{d-1}}), \NE^{d-1}(x_{2\cdot 3^{d-1}+1}, \ldots, x_{3^d})).
    \]
Hence, the two composed functions are the following. The first one is the so-called intersection function 
    $\INTS_n = \OR_n \circ \wedge$.
The second one is
    $\RNE_n = \NE_n \circ \wedge$.
The corresponding two languages are the following.
    \begin{equation}  \label{eq:L-INTS}
        L_{\INTS}(n) = \{x\#^ny \ | \ x, y \in \{0,1\}^n, \text{\ there exists some coordinate}\ i \text{\ such that}\ x_i=y_i=1\}, 
    \end{equation}
and
    \begin{equation}  \label{eq:L-RNE}
        L_{\RNE}(n) = \{x\#^ny \ | \ x, y \in \{0,1\}^n, \RNE_n(x,y) = 1\}.
    \end{equation}

\begin{thm}  \label{thm:sep-L-INTS-L-RNE}
    The following hold.
    \begin{enumerate}[(i)]
        \item $\ts_{2PFA}(L_{\INTS}(n)) = \Omega(n^2)$,              
        $\ts_{2QCFA}(L_{\INTS}(n)) = O(n^{3/2} \log n)$.
        
        \item $\ts_{2PFA}(L_{\RNE}(n)) = \Omega(n^2)$,              
        $\ts_{2QCFA, exact}(L_{\RNE}(n)) = O(n^{1.87} \log n)$.
    \end{enumerate}
\end{thm}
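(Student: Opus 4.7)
The plan is to instantiate the general framework of Theorem \ref{thm:lb} and Theorem \ref{thm:m-is-1} with two concrete composed functions, $\INTS_n = \OR_n \circ \wedge$ and $\RNE_n = \NE_n \circ \wedge$. Because the gadget $g = \wedge$ has width $m=1$, Corollary \ref{cor:lb} is not directly applicable, so the lower bounds must be obtained by combining Theorem \ref{thm:lb} with communication lower bounds proved directly.

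For part (i), the 2QCFA upper bound follows from Grover's bounded-error quantum query algorithm for $\OR_n$, which uses $O(\sqrt n)$ queries and $O(1)$ quantum basis states. Plugging it into Theorem \ref{thm:m-is-1} with $h = \OR_n$ gives a 2QCFA for $L_{\INTS}(n)$ with $t = O(\sqrt n)$, hence $\ts = O(tn\log n) = O(n^{3/2}\log n)$. For the 2PFA lower bound, observe that $\INTS_n = \neg\DISJ_n$, so Razborov's lower bound on set disjointness yields $R^{cc}(\INTS_n) = \Omega(n)$; combining with Theorem \ref{thm:lb} gives $\ts_{2PFA}(L_{\INTS}(n)) \ge n\left(R^{cc}(\INTS_n) - 1\right) = \Omega(n^2)$.

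For part (ii), the 2QCFA upper bound will come from an \emph{exact} quantum query algorithm for $\NE_n$ built recursively on the ternary tree. Using the identity $\NE^1(x_1,x_2,x_3) = (x_1\oplus x_2)\vee(x_2\oplus x_3)$ together with Deutsch's exact one-query algorithm for $\XOR$ of two bits, one obtains an exact algorithm for $\NE^1$; composing via deferred measurement and careful uncomputation then yields $Q_E(\NE_n) = O(n^{0.87})$ (the exponent matching the $1.87$ in the theorem). Theorem \ref{thm:m-is-1} with $h=\NE_n$ and $g=\wedge$ then produces an exact 2QCFA with $\ts = O\bigl(n \cdot Q_E(\NE_n) \cdot \log n\bigr) = O(n^{1.87}\log n)$. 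For the 2PFA lower bound, by Theorem \ref{thm:lb} it suffices to show $R^{cc}(\RNE_n) = \Omega(n)$.

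The main obstacle will be this last step. Unlike $\INTS_n$, the bound $R^{cc}(\RNE_n) = \Omega(n)$ is not an off-the-shelf result, and the lifting theorems of \cite{chattopadhyay2019query,chattopadhyay2019simulation} do not apply since $m=1$. A natural route is to find a restriction of inputs under which $\RNE_n$ reduces from $\DISJ_{\Theta(n)}$---e.g.\ by fixing part of the input so that the ternary recursive structure of $\NE_n$ collapses and the remaining computation becomes an $\OR$ of AND-gadgets---or alternatively to bound the discrepancy or corruption of $\RNE_n$ under a hard distribution, exploiting the fact that $\NE_n$ has full sensitivity $n$ at the all-zero input.
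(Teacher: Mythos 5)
Your approach is the same as the paper's: because the gadget $g=\wedge$ has width $m=1$, Corollary~\ref{cor:lb} cannot be invoked, so the lower bounds come from applying Theorem~\ref{thm:lb} directly together with a communication lower bound on the composed function, and the upper bounds come from plugging a quantum query algorithm for the outer function $h$ into Theorem~\ref{thm:m-is-1}. Part~(i) is complete and matches the paper exactly: $R^{cc}(\INTS_n)=\Omega(n)$ via Razborov's disjointness bound, and $Q(\OR_n)=O(\sqrt n)$ via Grover.

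There is one genuine gap in part~(ii). You flag the lower bound $R^{cc}(\RNE_n)=\Omega(n)$ as ``not an off-the-shelf result'' and leave it as a remaining obstacle with some suggested strategies. In fact it \emph{is} off the shelf: Ambainis's paper \cite{Amb13} --- the very same reference that supplies the bound $Q_E(\NE_n)=O(n^{0.87})$ --- also proves $R^{cc}(\RNE_n)=\Omega(n)$. This is precisely why the paper chose the pair $(\NE_n, \wedge)$ as its second example: both the query upper bound and the communication lower bound come from \cite{Amb13}. Your proposal does not establish the lower bound, only sketches possible routes, so as written the proof of (ii) is incomplete.

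A secondary issue is your sketch of the exact quantum algorithm for $\NE_n$. Using $\NE^1(x_1,x_2,x_3)=(x_1\oplus x_2)\vee(x_2\oplus x_3)$ with Deutsch's trick gives $Q_E(\NE^1)=2$, and if recursion preserved exactness naively one would get $Q_E(\NE_n)\le 2^d=n^{\log_3 2}\approx n^{0.63}$, which is strictly \emph{better} than the $n^{0.87}$ you (correctly) target. The fact that the true exponent is $\approx 0.8675$ tells you the naive ``compose via deferred measurement and careful uncomputation'' does not work: exact query algorithms do not compose for free, and Ambainis's construction for $\NE^d$ is a specifically designed recursion rather than a blind composition of Deutsch. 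Since the paper (and the theorem statement) only needs the existence of the $O(n^{0.87})$-query exact algorithm, the right move is simply to cite \cite{Amb13} rather than to re-derive it.
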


\begin{proof}
(i) The lower bound follows from Theorem \ref{thm:lb} and 
    $R^{cc}(\INTS_n) = \Omega(n)$ (see e.g., \cite{KusNis97}).
The upper bound follows from Theorem \ref{thm:ub} and the fact that Grover \cite{Gro96} showed 
    $Q(\OR_n) = O(\sqrt{n})$.
(ii) Similarly, this follows from Ambainis \cite{Amb13} where it was shown that 
    $R^{cc}(\RNE_n) = \Omega(n)$
and
    $Q_{E}(\NE_n) = O(n^{0.87})$.
\end{proof}

\section{Discussion}   \label{sec:conclusion}

We developed a technique that, when used together with lifting theorems in communication complexity, automatically  lifts separations of query complexity to separations of time-space complexity of two-way finite automata. We use this lifting technique to obtain a collection of new time-space separations for two-way finite automata in Theorem \ref{thm:separation-via-query}. Our results can be viewed as an application of lifting theorems in communication complexity to the complexity of automata. Since the lower bound in Corollary \ref{cor:automaton-to-ccprotocol} holds for $Q_E^{cc}$ as well, our technique in theory could be used to derive a time-space separation between $\ts_{2QCFA,exact}$ and $\ts_{2QCFA}$. This would require a currently unknown lifting theorem for quantum communication complexity. 

A limitation of our lifting technique is that the separation result obtained for time-space complexity for recognizing $L_f(n)$ is not necessarily tight, even if the corresponding separation in query complexity for $h$ is tight. It is possible to improve some of the separation results in Theorem \ref{thm:separation-via-query}  by directly simulating certain communication protocols  as we demonstrated in Theorem \ref{thm:sep-L-EQ}. The lifting technique we developed also demonstrates that why a lifting theorem with constant-size gadgets (i.e., do not require $m =\Theta(\log p)$ in Corollary \ref{cor:lb} but allow any $m \ge \Omega(1)$) would be more desirable, as we discussed in Section \ref{sec:lift-limitation}.

In studying time-space tradeoffs for Turing machines and branching programs, the multiparty number on the forehead communication model (see \cite{KusNis97}) has been successfully used, see e.g., \cite{BNS92, cctoBP3}. It might be interesting to see if our technique could be extended to those settings.

\section*{Acknowledgement}

We thank the anonymous referees whose valuable comments led to a much improved version of the paper.
S. Z.  thanks  A.~Ambainis, C.~Mereghetti, B.~Palano and A. Anshu for  discussions. This work was supported by  the  
Major Key Project of PCL, the  National Natural Science Foundation of China (Nos. 62361021,62272492), Guangxi Science and Technology Program (No.GuikeAD21075020),   the Guangdong Basic and Applied Basic Research Foundation (No. 2020B1515020050),  the Innovation Program for Quantum Science and Technology (2021ZD0302900).

\bibliography{mybib}

\begin{thebibliography}{10}

\bibitem{Forrelation}
Scott Aaronson and Andris Ambainis.
\newblock Forrelation: A problem that optimally separates quantum from classical computing.
\newblock In {\em Proceedings of the forty-seventh annual ACM symposium on Theory of computing}, pages 307--316, 2015.

\bibitem{aaronson2020quantum}
Scott Aaronson, Shalev Ben-David, Robin Kothari, and Avishay Tal.
\newblock Quantum implications of huang's sensitivity theorem.
\newblock {\em arXiv preprint arXiv:2004.13231}, 2020.

\bibitem{Amb13}
Andris Ambainis.
\newblock Superlinear advantage for exact quantum algorithms.
\newblock {\em SIAM Journal on Computing}, 45(2):617--631, 2016.

\bibitem{queryICM2018}
Andris Ambainis.
\newblock Understanding quantum algorithms via query complexity.
\newblock In {\em Proceedings of the International Congress of Mathematicians: Rio de Janeiro 2018}, pages 3265--3285, 2018.

\bibitem{ambainis2017separations}
Andris Ambainis, Kaspars Balodis, Aleksandrs Belovs, Troy Lee, Miklos Santha, and Juris Smotrovs.
\newblock Separations in query complexity based on pointer functions.
\newblock {\em Journal of the ACM (JACM)}, 64(5):1--24, 2017.

\bibitem{Amb98}
Andris Ambainis and Rusins Freivalds.
\newblock 1-way quantum finite automata: strengths, weaknesses and generalizations.
\newblock In {\em Proceedings of the 39th Annual Symposium on Foundations of Computer Science}, pages 332--341. IEEE, 1998.

\bibitem{ambainis2014exact}
Andris Ambainis, Jozef Gruska, and Shenggen Zheng.
\newblock Exact quantum algorithms have advantage for almost all boolean functions.
\newblock {\em Quantum Information \& Computation}, 15(5-6):435--452, 2015.

\bibitem{ambainis2009improved}
Andris Ambainis and Nikolajs Nahimovs.
\newblock Improved constructions of quantum automata.
\newblock {\em Theoretical Computer Science}, 410(20):1916--1922, 2009.

\bibitem{2QCFA}
Andris Ambainis and John Watrous.
\newblock Two-way finite automata with quantum and classical states.
\newblock {\em Theoretical Computer Science}, 287(1):299--311, 2002.

\bibitem{anshu2016new}
Anurag Anshu, Rahul Jain, Priyanka Mukhopadhyay, Ala Shayeghi, and Penghui Yao.
\newblock New one shot quantum protocols with application to communication complexity.
\newblock {\em IEEE Transactions on Information Theory}, 62(12):7566--7577, 2016.

\bibitem{BNS92}
L{\'a}szl{\'o} Babai, Noam Nisant, and M{\'a}ri{\'o} Szegedy.
\newblock Multiparty protocols, pseudorandom generators for logspace, and time-space trade-offs.
\newblock {\em Journal of Computer and System Sciences}, 45(2):204--232, 1992.

\bibitem{automataTradeoff3}
Paul Beame, Allan Borodin, Prabhakar Raghavan, Walter~L Ruzzo, and Martin Tompa.
\newblock A time-space tradeoff for undirected graph traversal by walking automata.
\newblock {\em SIAM Journal on Computing}, 28(3):1051--1072, 1998.

\bibitem{BST98}
Paul Beame, Thathachar~S Jayram, and Michael Saks.
\newblock Time--space tradeoffs for branching programs.
\newblock {\em Journal of Computer and System Sciences}, 63(4):542--572, 2001.

\bibitem{cctoBP3}
Paul Beame and Erik Vee.
\newblock Time-space tradeoffs, multiparty communication complexity, and nearest-neighbor problems.
\newblock In {\em Proceedings of the thiry-fourth annual ACM symposium on Theory of computing}, pages 688--697. ACM, 2002.

\bibitem{bertoni2003quantum}
Alberto Bertoni, Carlo Mereghetti, and Beatrice Palano.
\newblock Quantum computing: 1-way quantum automata.
\newblock In {\em Proceedings of the 7th international conference on Developments in language theory}, pages 1--20. Springer-Verlag, 2003.

\bibitem{bertoni2006some}
Alberto Bertoni, Carlo Mereghetti, and Beatrice Palano.
\newblock Some formal tools for analyzing quantum automata.
\newblock {\em Theoretical Computer Science}, 356(1-2):14--25, 2006.

\bibitem{bianchi2017quantum}
Maria~Paola Bianchi, Carlo Mereghetti, and Beatrice Palano.
\newblock Quantum finite automata: Advances on bertoni's ideas.
\newblock {\em Theoretical Computer Science}, 664:39--53, 2017.

\bibitem{tradeoffdiag}
Allan Borodin, Michael~J Fischer, David~G Kirkpatrick, Nancy~A Lynch, and Martin Tompa.
\newblock A time-space tradeoff for sorting on non-oblivious machines.
\newblock In {\em Proceedings of 20th Annual Symposium on Foundations of Computer Science}, pages 319--327. IEEE, 1979.

\bibitem{Buh98}
Harry Buhrman, Richard Cleve, and Avi Wigderson.
\newblock Quantum vs. classical communication and computation.
\newblock In {\em Proceedings of the thirtieth annual ACM symposium on Theory of computing}, pages 63--68, 1998.

\bibitem{BdW02}
Harry Buhrman and Ronald de~Wolf.
\newblock Complexity measures and decision tree complexity: a survey.
\newblock {\em Theoretical Computer Science}, 288(1):21--43, 2002.

\bibitem{chattopadhyay2019query}
Arkadev Chattopadhyay, Yuval Filmus, Sajin Koroth, Or~Meir, and Toniann Pitassi.
\newblock Query-to-communication lifting using low-discrepancy gadgets.
\newblock {\em SIAM Journal on Computing}, 50(1):171--210, 2021.

\bibitem{chattopadhyay2019simulation}
Arkadev Chattopadhyay, Michal Kouck{\`y}, Bruno Loff, and Sagnik Mukhopadhyay.
\newblock Simulation theorems via pseudo-random properties.
\newblock {\em computational complexity}, 28(4):617--659, 2019.

\bibitem{TStradeoff66}
Alan Cobham.
\newblock The recognition problem for the set of perfect squares.
\newblock In {\em Proceedings of the 7th Annual Symposium on Switching and Automata Theory}, pages 78--87. IEEE, 1966.

\bibitem{deWolfnotes}
Ronald de~Wolf.
\newblock Quantum computing: Lecture notes.
\newblock {\em arXiv preprint arXiv:1907.09415}, 2019.

\bibitem{TStradeoff2DFA81}
Pavol Duris and Zvi Galil.
\newblock A time-space tradeoff for language recongnition.
\newblock In {\em Proceedings of the 22nd Annual Symposium on Foundations of Computer Science}, pages 53--57. IEEE, 1981.

\bibitem{2PFAlb}
Cynthia Dwork and Larry Stockmeyer.
\newblock A time complexity gap for two-way probabilistic finite-state automata.
\newblock {\em SIAM Journal on Computing}, 19(6):1011--1023, 1990.

\bibitem{2PFAub}
R{\=u}si{\c{n}}{\v{s}} Freivalds.
\newblock Probabilistic two-way machines.
\newblock In {\em Proceedings on the 10th Mathematical Foundations of Computer Science}, pages 33--45. Springer-Verlag, 1981.

\bibitem{Gro96}
Lov~K Grover.
\newblock A fast quantum mechanical algorithm for database search.
\newblock In {\em Proceedings of the twenty-eighth annual ACM symposium on Theory of computing}, pages 212--219, 1996.

\bibitem{GRUSKA2017Generalizations}
Jozef Gruska, Daowen Qiu, and Shenggen Zheng.
\newblock Generalizations of the distributed deutsch–jozsa promise problem.
\newblock {\em Mathematical Structures in Computer Science}, 27(3):311--331, 2017.

\bibitem{autoCC86}
Juraj Hromkovi{\v{c}}.
\newblock Relation between chomsky hierarchy and communication complexity hierarchy.
\newblock {\em Acta Math. Univ. Com}, 48(49):311--317, 1986.

\bibitem{automataTradeoff2}
Juraj Hromkovi{\v{c}}.
\newblock Tradeoffs for language recognition on alternating machines.
\newblock {\em Theoretical computer science}, 63(2):203--221, 1989.

\bibitem{autoCC09}
Juraj Hromkovi{\v{c}}, Holger Petersen, and Georg Schnitger.
\newblock On the limits of the communication complexity technique for proving lower bounds on the size of minimal nfa’s.
\newblock {\em Theoretical Computer Science}, 410(30-32):2972--2981, 2009.

\bibitem{autoCC97}
Juraj Hromkovi{\v{c}} and Georg Schnitger.
\newblock Communication complexity and sequential computation.
\newblock In {\em Proceedings of the 22nd International Symposium on Mathematical Foundations of Computer Science}, pages 71--84. Springer, 1997.

\bibitem{autoCC01}
Juraj Hromkovi{\v{c}} and Georg Schnitger.
\newblock On the power of las vegas for one-way communication complexity, obdds, and finite automata.
\newblock {\em Information and Computation}, 169(2):284--296, 2001.

\bibitem{jain2016direct}
Rahul Jain, Attila Pereszl{\'e}nyi, and Penghui Yao.
\newblock A direct product theorem for two-party bounded-round public-coin communication complexity.
\newblock {\em Algorithmica}, 76(3):720--748, 2016.

\bibitem{autoCC03}
Galina Jir{\'a}skov{\'a}.
\newblock Note on minimal automata and uniform communication protocols.
\newblock {\em Grammars and Automata for String Processing}, 9:163--170, 2003.

\bibitem{Kla00}
Hartmut Klauck.
\newblock On quantum and probabilistic communication: Las vegas and one-way protocols.
\newblock In {\em Proceedings of the 32th annual ACM symposium on Theory of Computing}, pages 644--651. ACM, 2000.

\bibitem{Klk03}
Hartmut Klauck.
\newblock Quantum time-space tradeoffs for sorting.
\newblock In {\em Proceedings of the thirty-fifth annual ACM symposium on Theory of computing}, pages 69--76. ACM, 2003.

\bibitem{KSW07}
Hartmut Klauck, Robert {\v{S}}palek, and Ronald de~Wolf.
\newblock Quantum and classical strong direct product theorems and optimal time-space tradeoffs.
\newblock {\em SIAM Journal on Computing}, 36(5):1472--1493, 2007.

\bibitem{2QFA}
Attila Kondacs and John Watrous.
\newblock On the power of quantum finite state automata.
\newblock In {\em Proceedings of the 38th Annual Symposium on Foundations of Computer Science}, pages 66--75. IEEE, 1997.

\bibitem{def2PFA}
Yu~I Kuklin.
\newblock Two-way probabilistic automata.
\newblock {\em Avtomatika i vycislitelnaja tekhnika}, 5:35--36, 1973.

\bibitem{KusNis97}
Eyal Kushilevitz.
\newblock Communication complexity.
\newblock In {\em Advances in Computers}, volume~44, pages 331--360. Elsevier, 1997.

\bibitem{li2015hybrid}
Lvzhou Li and Yuan Feng.
\newblock On hybrid models of quantum finite automata.
\newblock {\em Journal of Computer and System Sciences}, 81(7):1144--1158, 2015.

\bibitem{montanaro2015exact}
Ashley Montanaro, Richard Jozsa, and Graeme Mitchison.
\newblock On exact quantum query complexity.
\newblock {\em Algorithmica}, 71(4):775--796, 2015.

\bibitem{Moo97}
Cristopher Moore and James~P Crutchfield.
\newblock Quantum automata and quantum grammars.
\newblock {\em Theoretical Computer Science}, 237(1-2):275--306, 2000.

\bibitem{nishimura2009application}
Harumichi Nishimura and Tomoyuki Yamakami.
\newblock An application of quantum finite automata to interactive proof systems.
\newblock {\em Journal of Computer and System Sciences}, 75(4):255--269, 2009.

\bibitem{qiu2020revisiting}
Daowen Qiu and Shenggen Zheng.
\newblock Revisiting deutsch-jozsa algorithm.
\newblock {\em Information and Computation}, 275:104605, 2020.

\bibitem{RYCC2020}
Anup Rao and Amir Yehudayoff.
\newblock {\em Communication Complexity and Applications}.
\newblock Cambridge University Press, 2020.

\bibitem{sherstov2020optimal}
Alexander~A Sherstov, Andrey~A Storozhenko, and Pei Wu.
\newblock An optimal separation of randomized and quantum query complexity.
\newblock In {\em Proceedings of the 53rd Annual ACM SIGACT Symposium on Theory of Computing}, pages 1289--1302, 2021.

\bibitem{liftingdet}
Xiaodi Wu, Penghui Yao, and Henry~S Yuen.
\newblock Raz-mckenzie simulation with the inner product gadget.
\newblock In {\em Electronic Colloquium on Computational Complexity (ECCC)}, volume~24, 2017.

\bibitem{yakaryilmaz2009succinctness}
Abuzer Yakaryilmaz and Cem Say.
\newblock Succinctness of two-way probabilistic and quantum finite automata.
\newblock {\em Discrete Mathematics and Theoretical Computer Science}, 12(4):19, 2010.

\bibitem{yamakami2014one}
Tomoyuki Yamakami.
\newblock One-way reversible and quantum finite automata with advice.
\newblock {\em Information and Computation}, 239:122--148, 2014.

\bibitem{Yao79}
Andrew Chi-Chih Yao.
\newblock Some complexity questions related to distributed computing.
\newblock In {\em Proceedings of the thirty-fifth annual ACM symposium on Theory of computing}, pages 209--213, 1979.

\bibitem{autoCC14}
Shenggen Zheng and Daowen Qiu.
\newblock From quantum query complexity to state complexity.
\newblock In {\em Computing with New Resources}, pages 231--245. Springer, 2014.

\bibitem{zheng2015power}
Shenggen Zheng, Daowen Qiu, and Jozef Gruska.
\newblock Power of the interactive proof systems with verifiers modeled by semi-quantum two-way finite automata.
\newblock {\em Information and Computation}, 241:197--214, 2015.

\bibitem{zheng2013state}
Shenggen Zheng, Daowen Qiu, Jozef Gruska, Lvzhou Li, and Paulo Mateus.
\newblock State succinctness of two-way finite automata with quantum and classical states.
\newblock {\em Theoretical Computer Science}, 499:98--112, 2013.

\bibitem{ZhgQiu112}
Shenggen Zheng, Daowen Qiu, Lvzhou Li, and Jozef Gruska.
\newblock One-way finite automata with quantum and classical states.
\newblock In {\em Languages Alive}, pages 273--290. Springer, 2012.

\end{thebibliography}

\end{document}